\newcommand{\newsection}[1]
{\section{#1}\setcounter{theorem}{0} \setcounter{equation}{0} \par\noindent}
\newtheorem{theorem}{Theorem}
\newtheorem{lemma}[theorem]{Lemma}
\newtheorem{corollary}[theorem]{Corollary}
\newtheorem{remark}[theorem]{Remark}
\newtheorem{claim}[theorem]{Claim}
\newcommand{\beq}{ \begin{equation} }
\newcommand{\eeq}{ \end{equation} }
\newcommand{\hdot}{{\dot H}}
\newcommand{\xdot}{{\dot X}}
\newcommand{\br}{{\mathbb R}}
\newcommand{\bc}{{\mathbb C}}
\newcommand{\brn}{ { \mathbb{R}^n } }
\renewcommand{\Re}{\operatorname{Re}}
\renewcommand{\Re}{\operatorname{Re}}
\newcommand{\two}{I\hspace{-1mm}I}
\newcommand{\three}{I\hspace{-1mm}I\hspace{-1mm}I}
\title{
The Cauchy problem for semi-linear Klein-Gordon 
\\ 
equations in Friedmann-Lema\^itre-Robertson-Walker spacetimes
}
\author
{
Makoto NAKAMURA
\thanks
{Graduate School of Information Science and Technology,  
Osaka University, 
1-5 Yamadaoka, Suita, Osaka 565-0871, JAPAN.  
E-mail:  \texttt{makoto.nakamura.ist@osaka-u.ac.jp} }
\ \ and \ 
Takuma YOSHIZUMI
\thanks
{Graduate School of Information Science and Technology,  
Osaka University, 
1-5 Yamadaoka, Suita, Osaka 565-0871, JAPAN.  
E-mail:  \texttt{yoshizumi.takuma@ist.osaka-u.ac.jp} }
}
\begin{document}

\maketitle

\begin{abstract}
The Cauchy problem for semi-linear Klein-Gordon equations is considered in Friedmann-Lema\^itre-Robertson-Walker spacetimes. 
The local and global well-posedness of the Cauchy problem is considered in Sobolev spaces.
The non-existence of global solutions is also considered.
\end{abstract}

\noindent
{\it Mathematics Subject Classification (2020)}: 
Primary 35L05; Secondary 35L71, 35Q75. 
%35G25 Initial value problems for nonlinear higher-order PDEs
%35L05 Wave equation
%35L15 Initial value problems for second-order hyperbolic equations
%35L70 Second-order nonlinear hyperbolic equations
%35L71 Second-order semilinear hyperbolic equations
%35L72 Second-order quasilinear hyperbolic equations
%35Q60 PDEs in connection with optics and electromagnetic theory
%35Q61 Maxwell equations
%35Q30 Navier-Stokes equations
%35Q75 PDEs in connection with relativity and gravitational theory
%74B05 Classical linear elasticity
%35G20 Nonlinear higher-order equations
%35Q76 Einstein equations

\vspace{5pt}

\noindent
{\it Keywords} : 
semilinear Klein-Gordon equation, Cauchy problem, 
Friedmann-Lema\^itre-Robertson-Walker spacetime

%%%%%%%%%%%%%%%%%%%%%%%%%%%%%%%
%
%%%%%%%%%%%%%%%%%%%%%%%%%%%%%%%

\section{Introduction}

We consider the Cauchy problem of semi-linear Klein-Gordon equations in Friedmann-Lema\^itre-Robertson-Walker spacetimes (FLRW spacetimes for short).
FLRW spacetimes are solutions of the Einstein equations 
with the cosmological constant under the cosmological principle.
They describe the spatial expansion or contraction, 
and yield some important models of the universe.
Let $n\ge1$ be the spatial dimension, 
$a(\cdot)>0$ be a scale-function defined on an interval $[0,T_0)$ for some $0<T_0\le\infty$,
$c>0$ be the speed of light.
We consider the metrics $\{g_{\alpha\beta}\}$ of FLRW spacetimes expressed by 
\begin{equation}
\label{Intro-RW}
-c^2(d\tau)^2
=
\sum_{0\le \alpha,\beta\le n}g_{\alpha\beta}dx^\alpha dx^\beta
:=
-c^2(dt)^2+
a^2(t)
\sum_{j=1}^n (dx^j)^2,
\end{equation}
where we have put the spatial curvature as zero,  
the variable $\tau$ denotes the proper time, 
$x^0=t$ is the time-variable 
(see, e.g., 
\cite{Carroll-2004-Addison, DInverno-1992-Oxford}).
When $a$ is a positive constant, 
the spacetime with \eqref{Intro-RW} reduces to the Minkowski spacetime.

%%%%%%%%%%%%%%%%%%%%%%%
% Derivation of equations
%%%%%%%%%%%%%%%%%%%%%%%

We recall a derivation of the Klein-Gordon equation.
Let $g :={\rm det}(g_{\alpha\beta})$ be the determinant of matrix $(g_{\alpha\beta})$, 
and let $(g^{\alpha\beta})$ be the inverse matrix of  
$(g_{\alpha\beta})$.
For a real number $1<p<\infty$ and $\lambda \in \br$, 
let $V$ be the real valued potential function given by 
\beq
\label{Def-V}
V(z):=\frac{2\lambda}{p+1} |z|^{p+1}
\eeq
for $z\in \mathbb{C}$.
Then, the equation of motion of massive scalar field described by a function $v$ 
with mass $m\ge 0$, the potential $V$ 
and the metric 
$\{g_{\alpha\beta}\}$ is given by the form 
$(\sqrt{|g|})^{-1} \partial_\alpha (\sqrt{|g|} g^{\alpha\beta}\partial_\beta v)=m^2v+V'(v)$, namely, 
\beq
\label{Cauchy-0}
\frac{1}{c^2}\partial_t^2v+\frac{n\dot{a}}{c^2a}\partial_tv-\frac{1}{a^2} \Delta v +m^2v+V'(v)=0,
\eeq
where 
$\Delta:=\sum_{j=1}^n \partial_j^2$ denotes the Laplacian, 
and 
$V'(v):=\lambda |v|^{p-1}v$.
We denote the first and second derivatives of one variable function $a$ by $\dot{a}$ and $\ddot{a}$, respectively.
By the transformation $u:=va^{n/2}$, 
\eqref{Cauchy-0} is rewritten as 
\beq
\label{Eq-KGLambda}
c^{-2}\partial_t^2u-a^{-2}\Delta u
+M^2u
+a^{n/2}V'(a^{-n/2}u)=0,
\eeq
where the function $M$ is called the curved mass defined by  
\beq
\label{Def-M}
M^2
=
m^2-\frac{n(n-2)}{4c^2}\left( \frac{\dot{a}}{a} \right)^2
-\frac{n}{2c^2}\cdot\frac{\ddot{a}}{a},
\ \ \ \ 
M:=\sqrt {M^2}.
\eeq
We note that $M$ is purely imaginary number when $M^2<0$.

Generalizing the function $V'$ in \eqref{Cauchy-0} into the power-type function $f$ given by 
\beq
\label{Def-f}
\lambda \in \bc,
\ \ f(z):=\lambda |z|^{p-1}z
\ \ 
\mbox{or}
\ \ 
f(z):=\lambda |z|^{p}
\eeq
for $z\in \mathbb{C}$,
we obtain the semilinear Klein-Gordon equation 
\beq
\label{Eq-KGV}
c^{-2}\partial_t^2u-a^{-2} \Delta u+M^2u+a^{n/2}f(a^{-n/2}u)=0.
\eeq

We consider the Cauchy problem of \eqref{Eq-KGV} given by   
\beq
\label{Cauchy}
\left\{
\begin{array}{l}
(c^{-2}\partial_t^2-a^{-2}(t) \Delta+M^2(t))u(t,x)+a^{n/2}(t)f(a^{-n/2}(t)u(t,x))=0,
\\
u(0,\cdot)=u_0(\cdot),\ \ \partial_tu(0,\cdot)=u_1(\cdot)
\end{array}
\right.
\eeq
for $(t,x)\in [0,T)\times\br^n$, 
where 
$u_0$ and $u_1$ are given initial data.

Here, we expect some dissipative effects when $\dot{a}>0$ in \eqref{Cauchy-0} 
by the dissipative term $n\dot{a}\partial_t v/c^2a$.
Namely, the spatial expansion yields a dissipative effect to the Klein-Gordon equation.
Actually, we have the property of dissipative wave equations in the energy estimate \eqref{Ineq-EnergyEs-Formal}, below, 
and we use it to solve the Cauchy problem.

\vspace{9pt}

%%%%%%%%%%%
%  Example  %
%%%%%%%%%%%

Now, we introduce some concrete examples of the scale-function $a$ and the curved mass $M$.
For $\sigma\in\mathbb{R}$ and the Hubble constant $H\in \br$, 
we put
\begin{equation}
\label{R-Def-T_0}
T_0:= 
\begin{cases}
\infty & \mbox{if}\ \ (1+\sigma)H\ge0,
\\
-\frac{2}{n(1+\sigma)H} & \mbox{if}\ \ (1+\sigma)H<0,
\end{cases}
\end{equation}
and define $a(\cdot)$ by 
\begin{equation}
\label{Def-a}
a(t):=
\begin{cases}
a_0\left\{1+\frac{n(1+\sigma)Ht}{2}\right\}^{2/n(1+\sigma)} & \mbox{if}\ \ \sigma\ne-1,
\\
a_0\exp(Ht) & \mbox{if}\ \ \sigma=-1
\end{cases}
\end{equation}
for $0\le t<T_0$. 
We note $a_0=a(0)$ and $H=\dot{a}(0)/a(0)$.
This scale-function $a(\cdot)$ describes the Minkowski spacetime when $H=0$ 
(namely, $a(\cdot)$ is a constant $a_0$),
the expanding space when $H>0$ with $\sigma\ge -1$,
the blowing-up space when $H>0$ with $\sigma< -1$ 
(the ``Big-Rip'' in cosmology), 
the contracting space when $H<0$ with $\sigma\le -1$, 
and the vanishing space when $H<0$ with $\sigma> -1$ 
(the ``Big-Crunch'' in cosmology).
It describes the de Sitter spacetime when $\sigma=-1$ 
(see, e.g., \cite{Nakamura-2020-OsakaJMath}).
The corresponding curved mass $M$ defined by \eqref{Def-M} is rewritten as  
\beq
\label{M-FLRW}
M^2
=
m^2+\sigma\left(\frac{nH}{2c}\right)^2\cdot 
\left\{1+\frac{n(1+\sigma)Ht}{2}\right\}^{-2}
\eeq
(see (1) in Lemma \ref{Lem-11}, below).
For $H\in \br$, $\sigma\in \br$ and $m\ge0$, 
put 
\beq
\label{Def-T1}
T_1:=
\begin{cases}
\infty & \mbox{if}\ \ (1+\sigma)H\ge 0,
\\
-\frac{2}{n(1+\sigma)H}\left( 1-\frac{ \sqrt{|\sigma|}n|H| }{2cm} \right) 
& \mbox{if}\ \ (1+\sigma)H<0, \ \sigma<0, \ m>\frac{\sqrt{|\sigma|} n |H|}{2c},
\\
-\frac{2}{n(1+\sigma)H} & \mbox{otherwise},
\end{cases}
\eeq
which satisfies $0<T_1\le T_0$.
We say that the solution $u$ of \eqref{Cauchy} is a global solution if $u$ exists on the time-interval $[0,T_0)$ 
since $T_0$ is the end of the spacetime.
We note that the squared curved mass $M^2$ changes its sign from plus to minus at $T_1$ 
when $(1+\sigma)H<0$, $\sigma<0$, $m>\sqrt{|\sigma|}n|H|/2c$ by \eqref{M-FLRW}.

\vspace{10pt}

We refer to closely related results on the Cauchy problem \eqref{Cauchy}.
In the expanding de Sitter spacetime ($H>0$ and $\sigma=-1$), Yagdjian \cite{Yagdjian-2012-JMAA} has shown 
small global solutions for \eqref{Cauchy}, 
provided that the norm of initial data 
$\|u_0\|_{H^s(\brn)}+\|u_1\|_{H^s(\brn)}$ is sufficiently small for given $s>n/2\ge 1$ 
(see \cite{Yagdjian-2013-Springer} for the system of the equations, 
and \cite{Galstian-Yagdjian-2017-NA-RWA} in the case of the Riemann metric space for each time slices).
In \cite{Nakamura-2014-JMAA}, the energy solutions 
for $u_0\in H^1(\brn)$ and $u_1\in L^2(\brn)$ 
have been shown, 
and the effect of the spatial expansion was characterized as the dissipative property in energy estimates.
This result was extended to the case of general FLRW spacetimes  
in \cite{Galstian-Yagdjian-2015-NA-TMA}.
Our main aim is to extend this result in general frame of Sobolev spaces with any nonnegative regularity.
We give some detailed results to show several effects of spatial variance on the problem.

In the case of  the asymptotic de Sitter spacetime, 
Baskin \cite{Baskin-2010-ProcCMAANU, Baskin-2012-AHP} has shown global solutions with small energy, 
% for the equation 
%$(\square_g+\lambda) \phi+f(\phi)=0$ 
%when $f(\phi)$ is a type of $|\phi|^{p-1} \phi$,  
%$p=1+4/(n-1)$, $\lambda>n^2/4$, $\phi_0\in H^1$ and $\phi_1\in L^2$, 
%where $\square_g$ denotes the d'Alembertian.   
%(see also \cite{Baskin-2010-ProcCMAANU} in the cases $\lambda=(n^2-1)/4$, $p=5$ with $n=3$, $p=3$ with $n=4$). 
which was further investigated on the semilinear term including derivatives of solution 
by Hintz and Vasy in 
\cite{Hintz-Vasy-2015-AnalysisPDE}.
%In this paper, we consider the Hartree type semilinear term, 
%and we give some fundamental results 
%to show the effects of the spatial variance on the problem.
We refer to 
\cite{Tsuchiya-Nakamura-2019-JCompApplMath} 
for a numerical simulation for the semilinear Klein-Gordon equation, 
and 
\cite{Nakamura-2015-JDE,Nakamura-2021-JDE} 
on the Cauchy problem  
for 
semilinear Schr\"odinger equations 
and 
semilinear Proca equations  
in the de Sitter spacetime.

\vspace{10pt}

We denote the Lebesgue space by $L^q(I)$ for an interval $I\subset \br$ 
and $1\le q\le \infty$ with the norm 
\[
\|Y\|_{L^q(I)}:=
\begin{cases}
\left\{
\int_I |Y(t)|^q dt
\right\}^{1/q} & \mbox{if}\ \ 1\le q<\infty,
\\
{\mbox{ess.}\sup}_{t\in I} |Y(t)| 
& \mbox{if}\ \ q=\infty.
\end{cases}
\]
We use the Sobolev space $H^\mu(\brn)$, the homogeneous Sobolev space $\dot{H}^\mu(\brn)$, 
and the homogeneous Besov space 
$\dot{B}^\mu_{r,s}(\brn)$ of order $\mu\ge0$ for $1\le r, s\le \infty$ 
(see \cite{Bergh-Lofstrom-1976-Springer} for their definitions).
Put $\nabla:=(\partial_1,\cdots,\partial_n)$.

Firstly, we show the well-posedness of the Cauchy problem \eqref{Cauchy}.
For $\mu\ge0$ and $T>0$, 
we define a function space $X(T)$ by
\begin{equation}
\label{X-Set}
X(T):=\{u:\, \|u\|_{\xdot^\nu(T)}<\infty \,\ \mbox{for} \,\ \nu=0,\mu_0,\mu \}
\end{equation}
with the metric $d(u,v):=\|u-v\|_{\xdot^0(T)}$ for $u,v\in X(T)$,
where 
\begin{multline}
\label{Def-XT}
\|u\|_{\xdot^\nu(T)}
:=
\max\big\{
c^{-1}\|\partial_tu\|_{L^\infty((0,T),\hdot^\nu(\brn))}, 
\|a^{-1} \nabla u\|_{L^\infty((0,T),\hdot^\nu(\brn))}, 
\\
\|Mu\|_{L^\infty((0,T),\hdot^\nu(\brn))}, 
\|\sqrt{\dot{a}a^{-3}} \nabla u\|_{L^2((0,T),\hdot^\nu(\br^n))},
\\
\|\sqrt{-\dot{M}M} u\|_{L^2((0,T),\hdot^\nu(\br^n))}
\big\}
\end{multline}
for $\nu=0,\mu_0,\mu$, 
provided $\dot{a}\ge0$, $M^2\ge0$ and $\dot{M}\le 0$.

We show our results for a general scale-function $a$ and a curved mass $M$ 
as theorems, 
and then we show our results 
for the examples of \eqref{Def-a} and \eqref{M-FLRW} as their corollaries.

We consider the existence of local and global solutions.
The existence of small energy solutions for $\mu_0=\mu=0$ with $1+4/n\le p\le 1+2/(n-2)$ has been considered in 
\cite[Theorem 1.1]{Galstian-Yagdjian-2015-NA-TMA} 
with $V'$ in \eqref{Eq-KGLambda} replaced by $\Gamma V'$ for some weighted function $\Gamma=\Gamma(t)$,
while our result does not need this weight function.
%where $|\Gamma|\le C \dot{a}/a$ is required for some positive constant $C>0$. 
The following result generalizes it into the high regularity $0\le \mu_0\le \mu<\infty$.

%%%%%%%%%%%%
%  Theorem  
%%%%%%%%%%%%

\begin{theorem}[Local and global solutions]
\label{Thm-13-Cor-14}
Let $n\ge1$.
Let $0\le \mu_0<n/2$ if $n=1,2$, and let $0\le \mu_0<n/2-1$ if $n\ge3$.
Let $\mu_0\le \mu<\infty$.
Let $p$ satisfy 
\beq
\label{Condition-p}
1\le p
\begin{cases}
<\infty & \mbox{if}\ \  n=1,2,\\
\le p(\mu_0):=1+\frac{2}{n-2\mu_0-2} & \mbox{if}\ \ n\ge 3.
\end{cases}
\eeq
Let $\lambda\in \mathbb{C}$, and let $f$ satisfy \eqref{Def-f}.
Assume $\mu<p$ unless 
$f(u)=\lambda |u|^{p-1}u$ for odd $p$, 
or 
$f(u)=\lambda |u|^p$ for even $p$.
Let $T_0$, $T_1$, $a$ and $M$ satisfy 
\beq
\label{Condition-aM}
\begin{cases}
0<T_1\le T_0\le \infty,
\\
a\in C^2([0,T_0),(0,\infty)),
\ \ 
\dot{a}\ge0, 
\\ 
M\in C^1([0,T_1),(0,\infty)), 
\ \ 
\dot{M}\le0.
\end{cases}
\eeq
Put $a_0:=a(0)$ and $M_0:=M(0)$.
Then the following results hold.

(1) 
(Local solutions.) 
For any $u_0\in H^{\mu+1}(\brn)$ and $u_1\in H^{\mu}(\brn)$, 
there exists a unique solution $u$ of \eqref{Cauchy} in 
$C([0,T),H^{\mu+1}(\brn))\cap C^1([0,T),H^\mu(\brn))$ $\cap X(T)$ 
for any $T$ with 
$0<T\le T_1$ and 
\beq
\label{Condition-T}
CcA(T)(C_0D_{\mu_0})^{p-1}\le 1,
\eeq 
where 
$C_0$ and $C$ are positive constants independent of $u_0$, $u_1$ and $T$,
\beq
\label{Def-A}
A(T)
:=
M(T)^{-\delta} 
\left\|
a^{-\mu_0(p-1)}
\left(\frac{\dot{a}}{a}\right)^{1/q_\ast-1}
\right\|_{L^{q_\ast}(0,T)},
\eeq
\beq
\label{Def-delta-qAst}
\delta:=1-\frac{(p-1)(n-2\mu_0-2)}{2},
\ \  
\frac{1}{q_\ast}:=1-\frac{(p-1)(n-2\mu_0)}{2q},
\eeq
\beq
D_{\mu_0}:=c^{-1}\|u_1\|_{\dot{H}^{\mu_0}(\brn)}
+ca_0^{-1}\|\nabla u_0\|_{\dot{H}^{\mu_0}(\brn)}
+M_0\|u_0\|_{\dot{H}^{\mu_0}(\brn)},
\eeq
and $q$ is an arbitrarily fixed number with 
\beq
\label{Def-q2}
0\le \frac{1}{q}\le \min\left\{\frac{1}{2},\frac{2}{(p-1)(n-2\mu_0)}\right\}.
\eeq

(2) 
(Continuous dependence on data.)
Let $v$ be the solution of the Cauchy problem \eqref{Cauchy} obtained in (1) for initial data $v_0:=v(0,\cdot)$ and $v_1=\partial_tv(0,\cdot)$.
If $(v_0,v_1)$ converges to $(u_0,u_1)$ in $H^{\mu+1}(\brn)\oplus H^\mu(\brn)$, 
then there exists $T>0$ such that $\|u-v\|_{\dot{X}^0(T)}$ tends to zero.

(3) 
(Small global solutions.)
If 
\beq
\label{Condition-SmallGlobal}
T_1=T_0, 
\ \ 
\inf_{0<t<T_0} M(t)>0,
\ \ 
Cc (C_0D_{\mu_0})^{p-1} \sup_{0<t<T_0} A(t) 
\le 1,
\eeq
then the solution $u$ obtained in (1) is a global solution, namely, $T$ can be taken as $T=T_0$, 
and there exists a free solution $u_+$ with $(c^{-2}\partial_t^2-a^{-2}\Delta +M^2)u_+=0$ 
such that 
\beq
\label{Thm-13-1000}
\max_{\substack{0\le \theta\le 1 \\ k=0,1}}
\left(\frac{M(t)}{a(t)}\right)^{\theta}\cdot 
\|\partial_t^k\big(u-u_+\big)(t)\|_{H^{\mu-1+\theta}(\brn)}
\eeq
converges to zero as $t$ tends to $T_0$.

(4) 
(Large global solutions.)
If $T_1=T_0$, $\mu_0=0$, $\lambda\ge0$ and $f(u)=\lambda |u|^{p-1}u$, 
then the solution $u$ obtained in (1) is a global solution, namely, $T=T_0$.
\end{theorem}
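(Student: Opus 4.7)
The plan is a contraction-mapping argument driven by a weighted linear energy identity, followed by a continuation/scattering argument for the global statements.

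For part (1), I first record the energy identity for the linear inhomogeneous problem $(c^{-2}\partial_t^2 - a^{-2}\Delta + M^2)u = F$. Pairing with $c^{-2}\partial_t u$ and integrating by parts, the time derivative of $E(t) := \tfrac{1}{2}(c^{-2}\|\partial_t u\|_{L^2}^2 + \|a^{-1}\nabla u\|_{L^2}^2 + \|Mu\|_{L^2}^2)$ produces, thanks to $\dot a \geq 0$ and $\dot M \leq 0$, two nonnegative dissipative terms $\|\sqrt{\dot a a^{-3}}\nabla u\|_{L^2}^2$ and $\|\sqrt{-\dot M M}\,u\|_{L^2}^2$ together with the forcing $\langle F,\partial_t u\rangle$. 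These are precisely the five components of the $\xdot^\nu(T)$ norm. Because every coefficient depends only on $t$, applying $|\nabla|^\nu$ yields the analogous estimate at any regularity $\nu \in \{0,\mu_0,\mu\}$ simultaneously, so the linear solution operator satisfies $\|u\|_{\xdot^\nu(T)} \lesssim D_\nu + c\int_0^T \|F(t)\|_{\hdot^\nu}\,dt$.

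Next I define $\Phi: u \mapsto \tilde u$ where $\tilde u$ solves the linear problem with forcing $F = -a^{n/2}f(a^{-n/2}u)$ and the prescribed initial data, and show $\Phi$ is a contraction on a closed ball of $X(T)$. The nonlinear estimate combines (i) Sobolev embedding $\hdot^{\mu_0}(\brn) \hookrightarrow L^r(\brn)$ with $1/r = 1/2 - \mu_0/n$, which requires $\mu_0 < n/2$ and turns the $L^\infty_t \hdot^{\mu_0}$ control coming from $\|Mu\|_{L^\infty_t \hdot^{\mu_0}}$ into a spatial $L^r$ bound; (ii) H\"older in time with exponent $q_\ast$ that absorbs the scale-factor weight $a^{-n(p-1)/2}$ arising from $a^{n/2}f(a^{-n/2}u)\sim a^{-n(p-1)/2}u^p$; and (iii) Moser-type estimates for $f(u)$, the restriction $\mu<p$ being needed only outside the smooth odd/even cases. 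Careful bookkeeping recovers the exponents in \eqref{Def-A}--\eqref{Def-delta-qAst} and gives contraction constant $CcA(T)(C_0 D_{\mu_0})^{p-1}$, so \eqref{Condition-T} secures a fixed point; the hardest step is verifying exactly how the dissipative $\sqrt{\dot a a^{-3}}\nabla u$-component must be interpolated with the $L^\infty_t \hdot^{\mu_0}$-component to produce the weight $a^{-\mu_0(p-1)}(\dot a /a)^{1/q_\ast-1}$. Part (2) is then automatic: differencing the integral equations for $u$ and $v$ yields a Lipschitz bound with constant $\leq 1/2$ after possibly shrinking $T$, hence $\|u-v\|_{\xdot^0(T)} \lesssim \|u_0 - v_0\|_{H^{\mu_0+1}} + \|u_1 - v_1\|_{H^{\mu_0}} \to 0$.

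For part (3), the uniform smallness in \eqref{Condition-SmallGlobal} keeps the contraction constant $\leq 1$ on every subinterval of $[0,T_0)$; together with $\inf_t M(t) > 0$ and $T_1 = T_0$ (so $M^2>0$ up to the end), a standard continuation argument extends the solution to all of $[0,T_0)$. For scattering, the finiteness of $\sup_{t<T_0}A(t)$ provides $L^1_t$-in-time integrability (relative to the natural measure) of the Duhamel forcing in the linear energy norm, so $(u(t),\partial_t u(t))$ has a limit along the free-flow rewinding; interpolation between the $\nu=\mu-1$ and $\nu=\mu$ bounds produces \eqref{Thm-13-1000}, with the $(M/a)^\theta$ factor reflecting the mass-vs-gradient scaling in the curved-mass energy.

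For part (4), when $\lambda \geq 0$ and $f(u) = \lambda |u|^{p-1}u$ the nonlinearity is $V'$ for $V(z) = \lambda|z|^{p+1}/(p+1) \geq 0$, so adding the potential contribution $\int_{\brn} a^{n/2} V(a^{-n/2}u)\,dx$ to $E(t)$ and exploiting $\dot a \geq 0$, $\dot M \leq 0$ yields a dissipative identity of the form $\tfrac{d}{dt}\bigl(E(t) + \int a^{n/2} V(a^{-n/2}u)\,dx\bigr) + (\text{nonneg.\ terms}) \leq 0$. This furnishes an a priori bound on the $\mu_0 = 0$ energy independent of $T$, which combined with the local existence in (1) applied at each time rules out blow-up and pushes the solution to $[0,T_0)$. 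The main obstacle is to check that the transformation $u = v a^{n/2}$ does not destroy the favorable sign when the potential term is differentiated; this reduces to comparing an extra $\dot a$ cross term against the gradient dissipation via Cauchy--Schwarz, which is exactly the structure already encoded in the $\xdot^0(T)$ norm.
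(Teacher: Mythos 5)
Your proposal follows essentially the same route as the paper's proof: the weighted energy identity of Lemma \ref{Lem-2}(2) applied at each regularity $\nu\in\{0,\mu_0,\mu\}$, a contraction argument whose nonlinear estimate combines the fractional chain rule, Sobolev embedding (using $\mu_0<n/2$), and a H\"older inequality in time with exponent $q_\ast$ after interpolating the dissipative component $\sqrt{\dot a a^{-3}}\nabla u$ with the $L^\infty_t$ components to produce the weight in \eqref{Def-A}, followed by the Duhamel/kernel bounds of Lemma \ref{Lem-7} for the scattering statement and the potential-augmented energy \eqref{Lem-2-5} plus continuation for part (4). One small simplification over what you anticipate in part (4): differentiating the transformed potential yields the extra term $\tfrac{n(p-1)}{2}\dot a\,a^{-n(p-1)/2-1}V(u)$, which is nonnegative outright when $\lambda\ge0$ and $\dot a\ge0$, so no Cauchy--Schwarz comparison against the gradient dissipation is required.
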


%%%%%%%%%%%%
%  Corollary  %
%%%%%%%%%%%%

The following Corollary \ref{Thm-16-Cor-17} is derived from Theorem \ref{Thm-13-Cor-14} 
for the scale-function $a$ defined by \eqref{Def-a}.
It generalizes the case $\sigma=-1$ with $H>0$ and $\mu_0=\mu=0$, namely, 
the energy solution in the expanding de Sitter spacetime in \cite{Nakamura-2014-JMAA} 
into general FLRW spacetimes.
We put 
\beq
\label{Def-G-Gamma}
G:=\frac{1}{Cc}
\left(\frac{a_0^{\mu_0}}{C_0D_{\mu_0}}\right)^{p-1},
\ \  
\gamma:=
\left\{
\frac{2\mu_0}{n(1+\sigma)}-\frac{n-2\mu_0}{2q}
\right\} (p-1)q_\ast,
\eeq
\beq
\label{Def-p1-p2}
p_1(\mu_0):=1+\frac{n(1+\sigma)}{2\mu_0},
\ \ 
p_2(\mu_0):=1+\frac{4}{n-2\mu_0},
\eeq
\beq
\label{Def-J0-J1}
B_0:=\frac{a_0^{\mu_0}}{C_0}\cdot
\left(
\frac{m^\delta}{CcB_1}
\right)^{1/(p-1)},
\ \ 
B_1:=\frac{1}{2H}\cdot \left| \frac{4}{ n(1+\sigma)(\gamma-1) } \right|^{1/q_\ast},
\eeq
\beq
\label{Def-J2-J3}
\ \ 
B_2:=\frac{1}{2H}\left\{\frac{4}{n(1+\sigma)}\right\}^{1/q_\ast},
\ \ 
B_3:=\frac{1}{2H}\left\{\frac{2}{\mu_0(p-1)q_\ast}\right\}^{1/q_\ast},
\eeq
where $C_0$ and $C$ are the constants in Theorem \ref{Thm-13-Cor-14}.

\begin{corollary}[Local and global solutions under \eqref{Def-a} and \eqref{M-FLRW}]
\label{Thm-16-Cor-17}
Assume the conditions on $n$, $\mu_0$, $\mu$, $p$ and $f$ in Theorem \ref{Thm-13-Cor-14}.
Let $m\ge0$.
For $T_0$ and $T_1$ given by \eqref{R-Def-T_0} and \eqref{Def-T1}, 
let $a$ and $M$ be the scale-function and the curved mass given by \eqref{Def-a} and \eqref{M-FLRW}.
Let $T$ satisfy $0<T\le T_1$.
Then the following results hold.

(1) 
(Local solutions.) 
Under one of the following conditions from (i) to (xiii), 
the conditions \eqref{Condition-aM}, \eqref{Condition-T} are satisfied,  
and the results (1) and (2) in Theorem \ref{Thm-13-Cor-14} hold.

(i) 
$H=0$, $\sigma\in \br$, $m>0$, $0<T\le G m^\delta$.

(ii) 
$H>0$, $\sigma\ge0$, $\mu_0>0$, $p>p_1(\mu_0)$, $m>0$, $q_\ast< \infty$, 
$D_{\mu_0}> B_0$,
\[
0<T\le \frac{2}{n(1+\sigma)H}
\left[
\left\{
1-
\left(
\frac{Gm^\delta}{B_1}
\right)^{q_\ast}
\right\}^{-1/(\gamma-1)} -1
\right]. 
\]

(iii)  
$H>0$, $\sigma>0$, $\mu_0>0$, $m=0$, 
$p>p_1(\mu_0)$, $q_\ast<\infty$, 
\beq
\label{Thm-16-3}
1-\left\{
1
+\frac{n(1+\sigma)HT}{2}
\right\}^{1-\gamma}
\le
\left(
\frac{G}{B_1}
\left[
\sigma
\left(\frac{nH}{2c}\right)^2
\left\{
1
+\frac{n(1+\sigma)HT}{2}
\right\}^{-2}
\right]^{\delta/2}
\right)^{q_\ast}.
\eeq

(iv) 
$H>0$, $\sigma\ge0$, $\mu_0=0$, $m>0$, $q_\ast<\infty$, 
\beq
\label{Thm-16-(iv)-1000}
0<T\le \frac{2}{n(1+\sigma)H}
\left[
\left\{
1+
\left(
\frac{Gm^\delta}{B_1}
\right)^{q_\ast}
\right\}^{1/(1-\gamma)}
-1
\right].
\eeq

(v) 
$H>0$, $\sigma\ge0$, $\mu_0>0$, $m>0$, $q_\ast<\infty$, $p<p_1(\mu_0)$ with 
\eqref{Thm-16-(iv)-1000}.

(vi) 
$H>0$, $\sigma>0$, $\mu_0=0$, $m=0$, $q_\ast<\infty$, 
\beq
\label{Thm-16-5}
\left\{
1+\frac{n(1+\sigma)HT}{2}
\right\}^{1-\gamma}
-1
\le
\left(
\frac{G}{B_1}
\left[
\sigma
\left(\frac{nH}{2c}\right)^2
\left\{
1+\frac{n(1+\sigma)HT}{2}
\right\}^{-2}
\right]^{\delta/2}
\right)^{q_\ast}.
\eeq

(vii) 
$H>0$, $\sigma<-1$, $\mu_0>0$, $m> \sqrt{|\sigma|}nH/2c$, $q_\ast<\infty$, 
\beq
\label{Thm-16-6}
1-
\left\{
1+\frac{n(1+\sigma)HT}{2}
\right\}^{1-\gamma}
\le
\left(
\frac{G}{B_1}
\left[
m^2
+
\sigma
\left(\frac{nH}{2c}\right)^2
\left\{
1+\frac{n(1+\sigma)HT}{2}
\right\}^{-2}
\right]^{\delta/2}
\right)^{q_\ast}.
\eeq

(viii) 
$H>0$, $\sigma\ge0$, $\mu_0>0$, $p=p_1(\mu_0)$, $m>0$, $q_\ast<\infty$, 
\[
0<T\le \frac{2}{n(1+\sigma)H}
\left\{
\exp\left(\frac{Gm^\delta}{B_2}\right)^{q_\ast}-1
\right\}.
\]

(ix) 
$H>0$, $\sigma>0$, $\mu_0>0$, $p=p_1(\mu_0)$, $m=0$, $q_\ast<\infty$, 
\beq
\label{Thm-16-8}
\log
\left\{
1+\frac{n(1+\sigma)HT}{2}
\right\}
\le
\left(
\frac{G}{B_2}
\left[
\sigma 
\left(
\frac{nH}{2c}
\right)^2
\left\{
1+\frac{n(1+\sigma)HT}{2}
\right\}^{-2}
\right]^{\delta/2}
\right)^{q_\ast}.
\eeq

(x) 
$H>0$, $\sigma=-1$, $\mu_0>0$, $p>1$, $m>nH/2c$, $q_\ast<\infty$, 
\[
D_{\mu_0}>\frac{a_0^{\mu_0}}{C_0}
\left[
\frac{1}{CcB_3}
\left\{
m^2-\left(\frac{nH}{2c}\right)^2
\right\}^{\delta/2}
\right]^{1/(p-1)},
\]
\[
0<T\le -\frac{1}{\mu_0(p-1)Hq_\ast}
\log
\left(
1-
\left[
\frac{G}{B_3}
\left\{
m^2-\left(\frac{nH}{2c}\right)^2
\right\}^{\delta/2}
\right]^{q_\ast}
\right).
\]

(xi) 
$H>0$, $\sigma=-1$, $\mu_0\ge0$, $m>nH/2c$, $q_\ast<\infty$, 
\[
\begin{cases}
1\le p<\infty & \mbox{if}\ \ \mu_0=0,
\\
p=1 & \mbox{if}\ \ \mu_0>0,
\end{cases}
\]
\[
0<T\le 
\frac{1}{2H}
\left[
2HG
\left\{
m^2-\left(\frac{nH}{2c}\right)^2
\right\}^{\delta/2}
\right]^{q_\ast}.
\]

(xii) 
$H>0$, $\sigma\ge0$, $\mu_0\ge0$, $q_\ast=\infty$, 
\[
p_2(\mu_0)\le p<
\begin{cases}
\infty & \mbox{if}\ \ \mu_0=0, \\
p_1(\mu_0) 
& \mbox{if}\ \ \mu_0>0,
\end{cases}
\]
\beq
\label{Thm-16-11-1}
D_{\mu_0}<\frac{a_0^{\mu_0}}{C_0}
\left[
\frac{H}{Cc}
\left\{
m^2+\sigma\left(\frac{nH}{2c}\right)^2
\right\}^{\delta/2}
\right]^{1/(p-1)},
\eeq
\beq
\label{Thm-16-11-2}
\left\{
1+\frac{n(1+\sigma)HT}{2}\right\}^{\zeta}
\le 
2HG
\left[
m^2+\sigma\left(\frac{nH}{2c}\right)^2
\left\{
1+\frac{n(1+\sigma)HT}{2}
\right\}^{-2}
\right]^{\delta/2},
\eeq
where $\zeta:=1-2\mu_0(p-1)/n(1+\sigma)$.

(xiii) 
$H>0$, $\sigma<-1$, $\mu_0\ge0$, $p\ge p_2(\mu_0)$, $m>\sqrt{|\sigma|}nH/2c$, $q_\ast=\infty$,
\[
0<T
\begin{cases}
<T_1 & \mbox{if}\ \ p=p(\mu_0), 
\\
\le 
-\frac{2}{n(1+\sigma)H}
\left[
1-
\frac{nH}{2c}
\sqrt{ \frac{|\sigma|}{m^2-(2HG)^{-2/\delta}} }
\right]
& \mbox{if}\ \ p<p(\mu_0),
\end{cases}
\]
\[
D_{\mu_0}
\begin{cases}
\le 
\frac{a_0^{\mu_0}}{C_0} 
\left(
\frac{H}{Cc}
\right)^{1/(p-1)}
& \mbox{if}\ \ p=p(\mu_0), 
\\
<
\frac{a_0^{\mu_0}}{C_0} 
\left[
\frac{H}{Cc}
\left\{
m^2+\sigma\left(\frac{nH}{2c}\right)^2
\right\}^{\delta/2}
\right]^{1/(p-1)}
& \mbox{if}\ \ p<p(\mu_0).
\end{cases}
\]

(2) 
(Small global solutions.)
Under one of the following conditions from (i) to (iv), 
the conditions \eqref{Condition-SmallGlobal} are satisfied, 
and the result (3) in Theorem \ref{Thm-13-Cor-14} holds.

(i) 
$H>0$, $\sigma\ge0$, $\mu_0>0$, $p>p_1(\mu_0)$, $m>0$, $q_\ast<\infty$,
\[
D_{\mu_0}\le \frac{a_0^{\mu_0}}{C_0} 
\left(\frac{m^\delta}{Cc B_1}\right)^{1/(p-1)}.
\]

(ii) 
$H>0$, $\sigma=-1$, $\mu_0> 0$, $p>1$, $m>nH/2c$, $q_\ast<\infty$,
\[
D_{\mu_0}\le 
\frac{a_0^{\mu_0}}{C_0} 
\left[
\frac{1}{CcB_3}
\left\{
m^2-\left(\frac{nH}{2c}\right)^2
\right\}^{\delta/2}
\right]^{1/(p-1)}.
\]

(iii) 
$H>0$, $\sigma\ge0$, $\mu_0>0$, $p\ge\max\{p_1(\mu_0),p_2(\mu_0)\}$, $q_\ast=\infty$,
\[
D_{\mu_0}\le \frac{a_0^{\mu_0}}{C_0} 
\left(\frac{Hm^\delta}{Cc}\right)^{1/(p-1)}.
\]

(iv) 
$H>0$, $\sigma=-1$, $\mu_0\ge 0$, $p\ge p_2(\mu_0)$, $q_\ast=\infty$,
\[
D_{\mu_0}\le \frac{a_0^{\mu_0}}{C_0} 
\left[
\frac{H}{Cc}
\left\{
m^2-\left(\frac{nH}{2c}\right)^2
\right\}^{\delta/2}
\right]^{1/(p-1)}.
\]

(3) 
(Large global solutions.)
Let $H\ge0$, $\mu_0=0$, $\lambda\ge0$, $m\ge0$, $\sigma\in \br$ with $m^2+\sigma(nH/2c)^2>0$. 
If $H=0$ or $\sigma\ge-1$, 
then the solution obtained in (1) is a global solution, 
namely, $T=T_0$.
If $H>0$ and $\sigma<-1$,  
then the solution obtained in (1) exists on the interval $[0,T_1)$.
\end{corollary}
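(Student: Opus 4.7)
The strategy is to apply Theorem~\ref{Thm-13-Cor-14} with the explicit $a$ and $M$ furnished by \eqref{Def-a}--\eqref{M-FLRW} and to verify the abstract hypotheses \eqref{Condition-aM}, \eqref{Condition-T}, and \eqref{Condition-SmallGlobal} case by case. First I would verify \eqref{Condition-aM}. On $[0,T_0)$ the function $a$ is smooth and positive, and $\dot a/a=H\{1+n(1+\sigma)Ht/2\}^{-1}$ when $\sigma\ne-1$ (respectively $\dot a/a=H$ when $\sigma=-1$), so $\dot a\ge 0$ holds precisely because we assume $H\ge 0$. By \eqref{M-FLRW}, $M^2$ is the sum of the constant $m^2$ and the monotone term $\sigma(nH/2c)^2\{1+n(1+\sigma)Ht/2\}^{-2}$; differentiating yields $\dot M\le 0$ throughout $[0,T_1)$, and the defining choice of $T_1$ in \eqref{Def-T1} as the first zero of $M^2$ in the mixed-sign regime keeps $M>0$ on $[0,T_1)$.

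The heart of the argument is an explicit computation of $A(T)$ in \eqref{Def-A}. Plugging in the formulas, the integrand of the $L^{q_\ast}$ norm becomes, when $\sigma\ne-1$, a single power of $1+n(1+\sigma)Ht/2$ whose exponent, after multiplication by $q_\ast$, equals $-\gamma$ as defined in \eqref{Def-G-Gamma}; when $\sigma=-1$ it collapses to $\exp(-\mu_0(p-1)Hq_\ast t)$. Three archetypal integrals then appear: a genuine power integral producing $\{1+n(1+\sigma)HT/2\}^{1-\gamma}-1$, governing (ii)--(vii); its logarithmic substitute at the critical exponent $\gamma=1$, which corresponds to $p=p_1(\mu_0)$ and handles (viii)--(ix); and an exponential integral in the de~Sitter regime $\sigma=-1$, handling (x)--(xi). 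Case (i) with $H=0$ reduces by direct evaluation to $A(T)=Tm^{-\delta}$, while the $q_\ast=\infty$ regimes (xii)--(xiii) are computed by monotonicity of both factors in $t$ since the $L^{q_\ast}$ norm degenerates to an $L^\infty$ supremum. In every subcase the prefactor $M(T)^{-\delta}$ is read off directly from \eqref{M-FLRW} (noting that $m=0$ forces $\sigma>0$ for the nondegeneracy of $M$), and inverting $CcA(T)(C_0D_{\mu_0})^{p-1}\le 1$ for $T$ produces, after routine algebra, the stated bounds with $G$, $B_0$, $B_1$, $B_2$, $B_3$ absorbing the numerical constants.

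For part (2) I would let $T\to T_0$ and identify the regimes in which $\sup_{0<t<T_0}A(t)<\infty$: this forces the polynomial or exponential growth of the integral to be beaten by the decay of $M(t)^{-\delta}$, which occurs in exactly the four configurations (i)--(iv), producing the displayed thresholds on $D_{\mu_0}$ and hence \eqref{Condition-SmallGlobal}. For part~(3) I would invoke the hypothesis $T_1=T_0$ of Theorem~\ref{Thm-13-Cor-14}(4): from \eqref{R-Def-T_0}--\eqref{Def-T1}, this holds whenever $H=0$ or $H>0$ with $\sigma\ge -1$, whereas for $H>0$ with $\sigma<-1$ the standing nondegeneracy $m^2+\sigma(nH/2c)^2>0$ forces $m>\sqrt{|\sigma|}nH/2c$, hence $T_1<T_0$, and global existence is known only on $[0,T_1)$. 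The main obstacle throughout is the bookkeeping in part~(1): tracking the sign of $\gamma-1$, the interaction of $M(T)^{-\delta}$ with the polynomial or exponential tail, and recognizing the thirteen distinct parameter windows in which the inversion of \eqref{Condition-T} produces qualitatively different expressions for the admissible $T$.
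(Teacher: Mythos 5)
Your proposal follows essentially the same route as the paper: the paper packages your ``explicit computation of $A(T)$'' as a separate lemma computing $B(T)$ in \eqref{Def-B} (Lemma \ref{Lem-15}), rewrites \eqref{Condition-T} as $B(T)\le GM(T)^\delta$, and then inverts this for $T$ in exactly the three archetypal regimes (power, logarithmic, exponential) plus the $H=0$ and $q_\ast=\infty$ degenerations that you identify, with parts (2) and (3) handled as you describe. The only detail worth flagging is that in case (i) one must take $q=\infty$ so that $q_\ast=1$ and the factor $(\dot a/a)^{1/q_\ast-1}$ is harmless when $\dot a\equiv 0$, which your sketch passes over but which does not affect the correctness of the approach.
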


\begin{remark}[Fujita exponent]
The conditions for $T$ from (i) to (xiii) in (1) in Corollary \ref{Thm-16-Cor-17} 
are satisfied for sufficiently small $T>0$. 
Global solutions for small data are obtained for $p>1$ in the case (ii) in (2) in Corollary \ref{Thm-16-Cor-17},
which shows that the Fujita exponent greater than $1$ does not appear in the expanding de Sitter spacetime.
It is known that the Cauchy problem of the Klein-Gordon equation in the Minkowski spacetime, i.e., $H=0$, given by 
\beq
\label{Cauchy-Linear}
\begin{cases}
\left(
c^{-2}\partial_t^2-a^{-2}_0\Delta +m^2
\right) u(t,x)
+f(u)(t,x)=0,
\\ 
u(0,\cdot)=u_0(\cdot)\in H^1(\brn),
\ \ 
\partial_t u(0,\cdot)=u_1(\cdot)\in L^2(\brn)
\end{cases}
\eeq
with $a_0>0$, $m>0$, $f(u)=\lambda|u|^p$ or $f(u)=\lambda|u|^{p-1}u$, $\lambda\in \br$, $1<p<\infty$ for $n=1,2$ and  $1<p<1+4/(n-2)$ for $n\ge3$ 
admits global solutions for any real-valued small data  $u_0$ and $u_1$.
Indeed, 
the energy defined by 
$E_\ast(u)(t):=c^{-2}\|\partial_tu(t)\|_2^2+a_0^{-2}\|\nabla u(t)\|_2^2+m^2\|u(t)\|_2^2$
satisfies 
\[
E_\ast(u)(t)+\int_\brn V(u)(t,x)dx=E_\ast(u)(0)+\int_\brn V(u)(0,x)dx,
\]
where $V(u)=2\lambda|u|^{p+1}/(p+1)$ for $f(u)=\lambda |u|^{p-1}u$, 
and $V(u)=2\lambda|u|^{p}u/(p+1)$ for $f(u)=\lambda |u|^{p}$ since $u$ is real-valued.
By the bound 
\[
\left|\int_\brn V(u)dx\right|
\lesssim 
\frac{|\lambda|}{p+1}
\left(\|u\|^{1-\theta}_2\|\nabla u\|_2^\theta\right)^{p+1}
\lesssim
\frac{|\lambda|\left(m^{-1+\theta}a_0^{\theta}\right)^{p+1}}{p+1}E_\ast(u)^{(p+1)/2},
\]
where $\theta:=n(p-1)/2(p+1)$, 
we have the inequality
\[
E_\ast(u)(t)\le E_\ast(u)(0)
+
\frac{C|\lambda|\left(m^{-1+\theta}a_0^{\theta}\right)^{p+1}}{p+1}
\left\{ E_\ast(u)(0)^{(p+1)/2}+E_\ast(u)(t)^{(p+1)/2}\right\}
\]
for some constant $C>0$, 
which shows the uniform bound of $E_\ast(u)(t)$ for $t\ge 0$ for sufficiently small data, 
and we obtain global solutions since the existence-time of local solutions depends only on the norms of initial data.
We refer to \cite[p. 631]{Keel-Tao-1999-AJM} on the argument of the Fujita exponent on the Klein-Gordon equation with combined nonlinear terms.
On the other hand, the above argument does not work for our problem \eqref{Cauchy} 
since general FLRW spacetimes yield dissipative or anti-dissipative terms in the energy estimates (see \eqref{Lem-2-1000} and \eqref{Lem-2-5}, below). 
\end{remark}

We have considered local and global solutions in Theorem \ref{Thm-13-Cor-14} and Corollary \ref{Thm-16-Cor-17} 
in the case of the spatial expansion $\dot{a}\ge0$.
Especially, the condition $\dot{a}>0$ yields the dissipative property on the energy estimate of the equation 
(see the term $\|\sqrt{a^{-3}\dot{a}} \nabla u\|_{L^2L^2}$ 
of (2) in Lemma \ref{Lem-2}, below). 
By this property, we obtain the small global solutions, 
while it does not work when $\dot{a}< 0$.

Next, we show blowing-up solutions for the Cauchy problem \eqref{Cauchy} in the case of spatial contraction $\dot{a}\le 0$ with  gauge-invariant nonlinear terms.
We are based on the argument in \cite[Theorem 7]{Menzala-Strauss-1982-JDE} for the Klein-Gordon equation 
with Hartree-type nonlinear term in the Minkowski spacetime to prove Theorem \ref{Thm-19}  
(see \cite[Theorem 5]{Nakamura-Takashima-2021-DIE} for the Hartree-type in the de Sitter spacetime).
The excluded case $\dot{a}>0$ in Theorem \ref{Thm-19} with $\dot{M}\le 0$ and $m>0$ has been considered by McCollum, Mwamba and Oliver in 
\cite[Theorem 1]{McCollum-Mwamba-Oliver-2024-NA}.

\begin{theorem}[Blowing-up under the gauge-invariance]
\label{Thm-19}
Let $T_0$, $T_1$, $a$ and $M$ satisfy 
\beq
\label{Condition-aM-BlowUp}
\begin{cases}
0<T_1\le T_0\le \infty,
\\
a\in C^2([0,T_0),(0,\infty)),
\ \ 
\dot{a}\le0, 
\\ 
M\in C^1([0,T_1),[0,\infty)), 
\ \ 
\dot{M}\le0.
\end{cases}
\eeq
Put $a_0:=a(0)$, $a_1:=\dot{a}(0)$ and $M_0:=M(0)$.
Let $n$, $p$, $f$, $\kappa$, $\kappa_\ast$ satisfy  
\beq
\label{Thm-19-1000}
n\ge1, \ \ 
\begin{cases}
1+\frac{4}{n}\le p<\infty & \mbox{if}\ \  \dot{a}\not\equiv0, \\
1< p<\infty & \mbox{if}\ \  \dot{a}\equiv0, \\
\end{cases}
\eeq
\beq
\label{Thm-19-2000}
\lambda<0, 
\ f(u):=\lambda |u|^{p-1}u, 
\ 2< \kappa\le p+1, 
\ 0<\kappa_\ast< \frac{(\kappa-2)}{4}.
\eeq
Let $u_0\in H^1(\brn)$ and $u_1\in L^2(\brn)$ satisfy 
\beq
\label{Thm-19-4000}
\frac{\|u_1\|_{L^2(\brn)}^2}{c^{2}}
+
\frac{\|\nabla u_0\|_{L^2(\brn)}^2}{a_0^{2}}
+
M_0^2\|u_0\|_{L^2(\brn)}^2
+
\frac{2\lambda\|u_0\|_{L^{p+1}(\brn)}^{p+1}}{a_0^{n(p-1)/2}(p+1)}<0,
\eeq
\beq
\label{Thm-19-5000}
a_1 \|u_0\|_{L^2(\brn)}^2+a_0\Re \int_\brn u_0(x)\overline{u_1(x)}dx>0,
\eeq
\beq
\label{Thm-19-6000}
T_\ast
:=
\frac{1}{2\kappa_\ast}
\cdot
\frac{a_0 \|u_0\|_{L^2(\brn)}^2}
{a_1\|u_0\|_{L^2(\brn)}^2+a_0 \Re \int_\brn u_0(x)\overline{u_1(x)}dx}\le T_1.
\eeq
Assume 
\beq
\label{Thm-19-3000}
(\kappa-2)
\left\{
c^2M^2
+
\left(
\frac{ \dot{a} }{a}
\right)^2
\right\}
+
2\frac{d}{dt}
\left(
\frac{ \dot{a} }{a}
\right)
\ge0
\eeq
on $[0,T_\ast)$.
Then the solution $u$ of the Cauchy problem \eqref{Cauchy} blows-up in finite time in $L^2(\brn)$ 
no later than $T_\ast$. 
\end{theorem}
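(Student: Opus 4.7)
The strategy is to apply the Levine-type concavity blow-up method in the spirit of Menzala--Strauss~\cite{Menzala-Strauss-1982-JDE} (cf.\ its adaptation to the de Sitter spacetime in \cite{Nakamura-Takashima-2021-DIE}), with the functional
\[
F(t):=a(t)^{2}\,\|u(t)\|_{L^{2}(\brn)}^{2}.
\]
A direct calculation gives $F(0)=a_{0}^{2}\|u_{0}\|_{L^{2}(\brn)}^{2}$ and $\dot{F}(0)=2a_{0}\bigl(a_{1}\|u_{0}\|_{L^{2}(\brn)}^{2}+a_{0}\Re\!\int_{\brn} u_{0}(x)\overline{u_{1}(x)}\,dx\bigr)$, which is positive by \eqref{Thm-19-5000}, and the quantity $T_{\ast}$ in \eqref{Thm-19-6000} equals precisely $F(0)/(\kappa_{\ast}\dot{F}(0))$. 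The objective is to prove
\[
F(t)\ddot{F}(t)-(1+\kappa_{\ast})\dot{F}(t)^{2}\ge 0 \quad\text{on }[0,T_{\ast}),
\]
since $(F^{-\kappa_{\ast}})''\le 0$ together with the initial slope $(F^{-\kappa_{\ast}})'(0)<0$ forces $F^{-\kappa_{\ast}}$ to vanish no later than $T_{\ast}$; the continuity and strict positivity of $a$ on the compact interval $[0,T_{\ast}]\subset[0,T_{1})$ translate this into blow-up of $\|u(t)\|_{L^{2}(\brn)}$.

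The first step is to control the natural energy
\[
E(t):=\frac{\|u_{t}\|_{L^{2}(\brn)}^{2}}{2c^{2}}+\frac{\|\nabla u\|_{L^{2}(\brn)}^{2}}{2a^{2}}+\frac{M^{2}\|u\|_{L^{2}(\brn)}^{2}}{2}+\frac{\lambda\|u\|_{L^{p+1}(\brn)}^{p+1}}{(p+1)\,a^{n(p-1)/2}},
\]
for which \eqref{Thm-19-4000} reads $2E(0)<0$. Although $E$ is not conserved on a general FLRW background, a direct differentiation shows that the weighted quantity $\mathcal{E}(t):=a(t)^{\beta}E(t)$ satisfies $\dot{\mathcal{E}}\le 0$ as soon as $\dot{a}\le 0$, $\dot{M}\le 0$, $\lambda<0$ and $\beta\in[2,n(p-1)/2]$: each of the four resulting terms carries a favourable sign. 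This interval is non-empty exactly when $p\ge 1+4/n$, which matches \eqref{Thm-19-1000}; in the excluded case $\dot a\equiv 0$ the $h$-dependent terms vanish and the argument works for any $p>1$. Taking $\beta=2$ and using $a(t)\le a_{0}$ yields $a(t)^{2}E(t)\le a_{0}^{2}E(0)$, hence $E(t)\le E(0)<0$ on the whole existence interval.

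Next, differentiating $F$ twice and substituting from the equation, then replacing the nonlinear contribution through the energy identity with an auxiliary parameter $\kappa\in(2,p+1]$ (the residual nonlinear term has sign $(\kappa/(p+1)-1)\lambda\ge 0$ and is discarded), produces the lower bound
\[
\ddot{F}\ge 2a^{2}(2h^{2}+\dot{h})\|u\|^{2}+8a^{2}h\,\Re\!\int u\overline{u_{t}}\,dx+(\kappa+2)a^{2}\|u_{t}\|^{2}+(\kappa-2)c^{2}\|\nabla u\|^{2}+(\kappa-2)c^{2}a^{2}M^{2}\|u\|^{2}-2c^{2}\kappa a^{2}E,
\]
where $h:=\dot{a}/a$ and all spatial norms are in $L^{2}(\brn)$. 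Multiplying by $F$ and applying the shifted Cauchy--Schwarz identity
\[
(\dot{F}-2hF)^{2}=4a^{4}\Bigl(\Re\!\int u\overline{u_{t}}\,dx\Bigr)^{2}\le 4a^{2}F\,\|u_{t}\|^{2}
\]
to the $(\kappa+2)a^{2}\|u_{t}\|^{2}$ contribution, the cross-terms in $hF\dot{F}$ rearrange so that the coefficient of $F^{2}$ becomes exactly $(\kappa-2)(c^{2}M^{2}+h^{2})+2\dot{h}$, non-negative by \eqref{Thm-19-3000}, while the coefficient of $\dot{F}^{2}$ reduces to $(\kappa-2-4\kappa_{\ast})/4>0$ by the strict bound $\kappa_{\ast}<(\kappa-2)/4$ in \eqref{Thm-19-2000}. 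The remaining contributions $(2-\kappa)hF\dot{F}$, $(\kappa-2)c^{2}a^{2}\|u\|^{2}\|\nabla u\|^{2}$ and $-2c^{2}\kappa a^{2}FE$ are all non-negative (the first while $\dot F\ge 0$, the last by the weighted-energy bound), which yields $F\ddot{F}\ge(1+\kappa_{\ast})\dot{F}^{2}$. This inequality is self-sustaining because $\ddot{F}\ge 0$ keeps $\dot{F}\ge \dot{F}(0)>0$ on $[0,T_{\ast})$, and the classical Levine concavity lemma then closes the argument. I anticipate the main obstacle to be the careful algebraic bookkeeping that makes \eqref{Thm-19-3000} appear precisely as the coefficient of $F^{2}$ after Cauchy--Schwarz; every other step is routine.
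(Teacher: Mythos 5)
Your proposal is correct and follows essentially the same route as the paper: the functional $F=a^{2}\|u\|_{2}^{2}$ is the paper's $g$, your weighted bound $a^{2}E(t)\le a_{0}^{2}E(0)$ (with $\beta=2$) is the paper's statement $\widetilde{F}^{1}\ge 0$, and the introduction of $\kappa\in(2,p+1]$, the shifted Cauchy--Schwarz step, the identification of \eqref{Thm-19-3000} as the coefficient of $F^{2}$, and the concavity of $F^{-\kappa_\ast}$ all match the paper's argument (including the continuity argument needed to propagate $\dot F\ge0$, which is the paper's Claim \ref{Claim-DotG}).
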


\begin{remark}
The assumption \eqref{Thm-19-4000} or \eqref{Thm-19-5000} requires $u_0\not\equiv0$. 
The assumption \eqref{Thm-19-4000} is satisfied for negatively large $\lambda<0$ when $u_0\not\equiv0$.
The assumption \eqref{Thm-19-5000} is satisfied when $u_0\not\equiv0$ and $u_1=\rho u_0$ with a real number $\rho$ sufficiently large. 
\end{remark}

The following corollary is derived from Theorem \ref{Thm-19} 
for the scale-function $a$ defined by \eqref{Def-a}.
We refer to \cite[Theorem 5.1]{ZhangJian-2002-NA} 
for blowing-up solutions in the Minkowski spacetime, i.e., $H=0$, with gauge-invariant nonlinear terms 
with the initial energy controlled by a ground state of stationary elliptic equation.
We refer to \cite[Proposition 1.1]{Nakamura-2021-JMP} and \cite[Theorems 1.1 and 1.2]{Yagdjian-2009-DCDS}  
for blowing-up solutions in the de Sitter spacetime, i.e., $\sigma=-1$, for gauge-variant nonlinear terms based on different techniques due to Kato's inequality.
There are several papers on blowing-up solutions in FLRW spacetimes with gauge-variant nonlinear terms 
(see \cite{Wei-Yong-2024-JMP} and the references therein).

Put 
\beq
\label{Def-T2}
T_2:=
-\frac{2}{n(1+\sigma)H}
\left[
1+
\frac{H}{mc}\cdot 
\left\{
\frac{n(1+\sigma)-(p-1)\left(\sigma n^2/4+1\right) }{p-1}
\right\}^{1/2}
\right]
\eeq
\beq
\label{Def-pAst-pSharp}
p_\ast:=1+\frac{4n(1+\sigma)}{4+\sigma n^2},
\ \ 
p_{\sharp}:=1+\frac{4n(1+\sigma)}{4+\sigma n^2+(2mc/H)^2},
\eeq
where we put $T_2=\infty$ when $(1+\sigma)H=0$.

\begin{corollary}[Blowing-up  under the gauge-invariance under \eqref{Def-a} and \eqref{M-FLRW}]
\label{Cor-20}
Let $n$, $m$, $p$, $f$, $\kappa$, $\kappa_\ast$ satisfy \eqref{Thm-19-1000} and \eqref{Thm-19-2000} with $\kappa=p+1$.
Let $T_0$, $T_1$, $T_2$ and $T_\ast$ be defined by \eqref{R-Def-T_0}, \eqref{Def-T1}, \eqref{Def-T2} and \eqref{Thm-19-6000}.
Let $u_0\in H^1(\brn)$ and $u_1\in L^2(\brn)$ satisfy \eqref{Thm-19-4000}, \eqref{Thm-19-5000} and \eqref{Thm-19-6000}.
Assume one of the following conditions from (i) to (vii) holds.

\vspace{5pt}

(i) $H=0$, $\sigma\in \br$, $m\ge0$.

(ii) $H<0$, $\sigma=-1$, $m\ge \frac{n|H|}{2c}$.

(iii) $H<0$, $\sigma=0$, $m\ge0$, $p\ge p_\ast$, $T_\ast\le T_0$.

(iv) $H<0$, $\sigma=0$, $m\ge0$, $p_\sharp<p< p_\ast$, $T_\ast\le T_2$.

(v) $H<0$, $\max\{-1,-\frac{4}{n^2} \}<\sigma<0$, $p\ge p_\ast$, $m>\frac{\sqrt{|\sigma|} n|H|}{2c}$, $T_\ast\le T_1$.

(vi) $H<0$, $\max\left[-1, -\frac{4}{n^2}\left\{1+\left(\frac{mc}{H}\right)^2\right\}\right]<\sigma\le -\frac{4}{n^2}$, $p>p_\sharp$, $m>\frac{\sqrt{|\sigma|} n|H|}{2c}$, $T_\ast\le \min\{T_1,T_2\}$.

(vii) $H<0$, $\max\left\{-1,-\frac{4}{n^2}\right\}<\sigma<0, p_\sharp<p<p_\ast$, $m>\frac{\sqrt{|\sigma|} n|H|}{2c}$, $T_\ast\le \min\{T_1,T_2\}$.

\vspace{5pt}

\noindent
Then the conditions \eqref{Condition-aM-BlowUp} and \eqref{Thm-19-3000} are satisfied, and the result in Theorem \ref{Thm-19} hold.
Namely, the solution $u$ of the Cauchy problem \eqref{Cauchy} blows-up in finite time in $L^2(\brn)$ no later than $T_\ast$.
\end{corollary}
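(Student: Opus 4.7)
The plan is to verify, case by case, that the hypotheses \eqref{Condition-aM-BlowUp} and \eqref{Thm-19-3000} of Theorem \ref{Thm-19} hold for the scale function \eqref{Def-a} and curved mass \eqref{M-FLRW}; then the blowing-up conclusion follows immediately from that theorem. First I set up the bookkeeping common to all cases. For $\sigma\ne -1$ let $b(t):=1+n(1+\sigma)Ht/2$, so that $a(t)=a_0 b(t)^{2/n(1+\sigma)}$, $\dot a(t)=a_0 H\, b(t)^{2/n(1+\sigma)-1}$, and $\dot a/a=H/b(t)$; for $\sigma=-1$, $\dot a/a\equiv H$. Thus $H\le 0$ in all cases (i)--(vii) immediately yields $\dot a\le 0$ and $a\in C^2$. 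Differentiating $M^2=m^2+\sigma(nH/2c)^2 b(t)^{-2}$ gives $2M\dot M=-\sigma(1+\sigma)n^2 H^3/(4c^2)\cdot b^{-3}$, whose sign reduces to $\mathrm{sgn}(\sigma(1+\sigma)H)$. In each case the sign is nonpositive: trivially when $\sigma\in\{0,-1\}$ or $H=0$, while in cases (v)--(vii) one has $-1<\sigma<0$ and $H<0$, hence $\sigma(1+\sigma)H>0$ and $\dot M\le 0$. The nonnegativity $M^2\ge 0$ on $[0,T_1)$ is automatic for $\sigma\ge 0$; for $\sigma<0$ and $m>\sqrt{|\sigma|}\,n|H|/(2c)$ the definition \eqref{Def-T1} is exactly the time at which $b(T_1)=\sqrt{|\sigma|}\,n|H|/(2cm)$ and $M(T_1)=0$; for $\sigma=-1$, the hypothesis $m\ge n|H|/(2c)$ yields $M^2=m^2-(nH/2c)^2\ge 0$. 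The bound $T_1\le T_0$ is built into \eqref{Def-T1}.

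The heart of the proof is the pointwise inequality \eqref{Thm-19-3000} with $\kappa=p+1$. A direct substitution using $d(\dot a/a)/dt=-n(1+\sigma)H^2/(2b^2)$ and $c^2 M^2=c^2m^2+\sigma(nH/2)^2 b^{-2}$ collapses the left-hand side, for $\sigma\ne-1$, to the clean form
\begin{equation}
\label{prop-key}
(p-1)c^2 m^2 + \frac{H^2}{b(t)^2}\Bigl\{(p-1)\bigl(\tfrac{\sigma n^2}{4}+1\bigr) - n(1+\sigma)\Bigr\},
\end{equation}
while for $\sigma=-1$ it reduces to $(p-1)\{c^2 m^2 + H^2(1-n^2/4)\}$, which is nonnegative precisely when $m\ge n|H|/(2c)$, covering case (ii). For $\sigma\ne-1$, the bracket in \eqref{prop-key} is nonnegative exactly when $p\ge p_\ast$ (defined in \eqref{Def-pAst-pSharp}), handling cases (i) trivially ($H=0$) and (iii), (v) (where $p\ge p_\ast$ and $b>0$ imply \eqref{prop-key}$\ge 0$ on all of $[0,T_0)$, resp.\ $[0,T_1)$).

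In the remaining cases (iv), (vi), (vii), the bracket in \eqref{prop-key} is negative and one must bound $b(t)$ from below. Rewriting the nonnegativity of \eqref{prop-key} as
\begin{equation}
\label{prop-bsquared}
b(t)^2 \ge \frac{H^2\bigl\{n(1+\sigma)-(p-1)(\sigma n^2/4+1)\bigr\}}{(p-1)c^2 m^2},
\end{equation}
and observing that $b$ is strictly decreasing (since $n(1+\sigma)H/2<0$ in these cases), I only need \eqref{prop-bsquared} to hold at $t=T_\ast$. A short calculation using \eqref{Def-T2} shows that $b(T_2)^2$ equals precisely the right-hand side of \eqref{prop-bsquared}, so the hypothesis $T_\ast\le T_2$ (combined with $T_\ast\le T_1$ in (vi), (vii) to keep $M^2\ge 0$ simultaneously) gives $b(t)\ge b(T_2)$ for $t\in[0,T_\ast]$ and hence \eqref{Thm-19-3000}. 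The positivity $T_2>0$ and well-definedness of the square root in \eqref{Def-T2} amount to $p>p_\sharp$, which is the condition explicitly imposed in (iv), (vi), (vii). Once \eqref{Condition-aM-BlowUp} and \eqref{Thm-19-3000} are verified, Theorem \ref{Thm-19} applies under the standing assumptions \eqref{Thm-19-4000}--\eqref{Thm-19-6000} and delivers blow-up no later than $T_\ast$.

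The main obstacle is the algebraic case-splitting in (iv)--(vii): one must simultaneously track the $T_1$-threshold (where $M^2$ would go negative) and the $T_2$-threshold (where the bracket in \eqref{prop-key} forces a lower bound on $b$), identify which is binding in each parameter regime, and confirm that the thresholds $p_\ast$ and $p_\sharp$ in \eqref{Def-pAst-pSharp} correspond exactly to the sign of the bracket and to the existence of $T_2$. Beyond this bookkeeping, every step is a direct substitution into \eqref{Def-a}, \eqref{M-FLRW}, \eqref{Def-T1}, \eqref{Def-T2}.
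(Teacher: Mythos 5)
Your proposal is correct and follows essentially the same route as the paper: reduce \eqref{Thm-19-3000} with $\kappa=p+1$ to the sign of $(p-1)c^2m^2+IH^2b^{-2}$ with $I=(p-1)(1+\sigma n^2/4)-n(1+\sigma)$, identify $p_\ast$ with $I\ge0$ and $p_\sharp$ with $T_2>0$, and use the monotonicity of $b$ together with $b(T_2)^2=-IH^2/(p-1)c^2m^2$ and the $T_1$-threshold for $M^2\ge0$. Only cosmetic slips: the factor in $2M\dot M$ should be $n^3$ rather than $n^2$, your stated sign rule for $\dot M$ is off by a minus sign (though your conclusion $\dot M\le0$ is the correct one), and in case (ii) the hypothesis $m\ge n|H|/2c$ is sufficient but not ``precisely'' equivalent to nonnegativity, which actually only needs $m\ge\sqrt{n^2/4-1}\,|H|/c$.
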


We note 
$T_0=T_1=\infty$ for (i) and (ii), 
$T_0=T_1=-2/n(1+\sigma)H$ for (iii) and (iv),
$T_0=-2/n(1+\sigma)H>T_1>0$ for (v), (vi) and (vii)  
in Corollary \ref{Cor-20}.

In Corollaries \ref{Thm-16-Cor-17} and \ref{Cor-20}, 
the case $H=0$ reduces to the semilinear Klein-Gordon equation in the Minkowski spacetime, 
which was extensively studied 
(see, e.g., 
\cite{Nakamura-Ozawa-2001-PRIMS, Nakamura-Ozawa-2002-ASNSP} 
for closely related results on global solutions, 
and 
\cite{Ball-1978-AcademicPress},
\cite{Levine-1974-TransAMS},
\cite{Yang-Xu-2018-ApplMathLetters} 
on blowing-up solutions, and the references therein). 
We included this case to compare it with the case $H\neq0$.

We denote the inequality $A\le CB$ for some constant $C>0$ which is not essential for the argument by $A\lesssim B$.
This paper is organized as follows.
In Section \ref{Sec-Pre}, 
we collect several results on energy estimates, fundamental solutions and evolution operators for linear equations, 
the scale-function, and the curved mass in FLRW spacetimes, 
which are required to prove 
Theorem \ref{Thm-13-Cor-14},  
Corollary \ref{Thm-16-Cor-17},
Theorem \ref{Thm-19},
Corollary \ref{Cor-20} 
in Sections \ref{Sec-Thm-13-Cor-14},
\ref{Sec-Thm-16-Cor-17},
\ref{Sec-Thm-19},
\ref{Sec-Cor-20}, respectively.

%%%%%%%%%%%%%%%%%%%%%%%%%%%%%
% Preliminaries
%%%%%%%%%%%%%%%%%%%%%%%%%%%%%

\newsection{Preliminaries}
\label{Sec-Pre}
We prepare several estimates to prove the results in the previous section.
For $T$ with $0<T\le T_0$, 
let us consider the Cauchy problem of the linear equation 
\beq
\label{Cauchy-Linear}
\begin{cases}
\left(
c^{-2}\partial_t^2-a^{-2}(t)\Delta +M^2(t)
\right) u(t,x)
+h(t,x)=0,
\\ 
u(0,x)=u_0(x),
\ \ 
\partial_t u(0,x)=u_1(x)
\end{cases}
\eeq
for $0\le t<T$ and $x\in \brn$, 
where $h$ is any function on $[0,T)\times \brn$.
We collect fundamental energy estimates for the Cauchy problem with their proofs for the completeness of the paper.
We recall that $M$ takes purely imaginary values when $M^2<0$.

%%%%%%%%%%%%%%%%%%%%%%%%%%%%
%  Energy estimate  proposition  %
%%%%%%%%%%%%%%%%%%%%%%%%%%%%

\begin{lemma}[Energy estimates]
\label{Lem-2}
Let $n\ge1$.
Let $T_0$, $a$ and $M$ satisfy 
\beq
\label{Condition-aM-Energy}
\begin{cases}
0<T_0\le \infty,
\\
a\in C^2([0,T_0),(0,\infty)),
\\ 
M^2\in C^1([0,T_0),\br).
\end{cases}
\eeq
Put $a_0:=a(0)$ and $M_0:=M(0)$.
For any $0<T\le T_0$ and the solution $u$ of \eqref{Cauchy-Linear} on $[0,T)$, the following results hold.

(1) 
Put $e^0:=c^{-2} |\partial_tu|^2+a^{-2}|\nabla u|^2+M^2 |u|^2$, 
$(e^1,\cdots,e^n):=-a^{-2} 2\Re(\overline{\partial_tu} \nabla u)$,
$e^{n+1}:=-2M\dot{M} |u|^2+2a^{-3} \dot{a} |\nabla u|^2$.
Then 
\beq
\label{Lem-1-Proof-50}
\partial_t e^0+\sum_{j=1}^n \partial_j e^j+e^{n+1}+ 2\Re( \overline{\partial_t u} h)=0
\eeq
and 
\begin{multline}
\label{Lem-2-1000}
E(u)(t)
:=
c^{-2} \|\partial_t u(t)\|_2^2+a^{-2}(t)\|\nabla u(t)\|_2^2+M^2(t)\|u(t)\|_2^2
-2\int_0^t M\dot{M}\|u\|_2^2 ds
\\
+2\int_0^t a^{-3}\dot{a}\|\nabla u\|_2^2 ds
+2\Re\int_0^t \int_\brn \overline{\partial_t u} h dx ds
=
E(u)(0).
\end{multline}

(2) 
If $\dot{a}\ge0$, $M^2\ge0$ and $M\dot{M}\le 0$, 
then  
\begin{multline}
\label{Ineq-EnergyEs-Formal}
c^{-1}\|\partial_tu\|_{L^\infty L^2}
+\|a^{-1}\nabla u\|_{L^\infty L^2}
+\|M u\|_{L^\infty L^2}
+\left\|\sqrt{-M\dot{M}} u\right\|_{L^2 L^2}
+\left\|\sqrt{a^{-3} \dot{a} } \nabla u\right\|_{L^2 L^2}
\\
\lesssim  
c^{-1}\|u_1\|_{2}
+a_0^{-1}\|\nabla u_0\|_{2}
+\|M_0 u_0\|_{2}
+c\|h\|_{L^1L^2},
\end{multline}
where $L^pL^2:=L^p((0,T),L^2(\brn))$ for $p=1,2,\infty$.

(3) 
Put 
$\widetilde{e}^0:=e^0+a^{-n(p-1)/2} V(u)$,
$\widetilde{e}^{n+1}:=e^{n+1}+{n(p-1)} \dot{a}V(u)/2a^{n(p-1)/2+1}$,
where 
\beq
\label{f-V}
\begin{cases}
f(u):=\lambda |u|^{p-1}u, \ \ \lambda\in \br, \ \ 1\le p<\infty,
\\ 
V(u):=\frac{ 2\lambda|u|^{p+1} }{p+1},
\ \ h:=a^{n/2}f(a^{-n/2}u). 
\end{cases}
\eeq
Then 
\beq
\label{Lem-2-2000}
\partial_t \widetilde{e}^0+\sum_{j=1}^n \partial_j e^j+\widetilde{e}^{n+1}=0
\eeq
and 
\begin{multline}
\label{Lem-2-5}
\widetilde{E}(u)(t)
:=
c^{-2} \|\partial_tu(t)\|_2^2
+
a^{-2}(t) \|\nabla u(t)\|_2^2
+
M^2(t)\|u(t)\|_2^2
+
a^{-n(p-1)/2}(t) \int_\brn V(u)(t)dx
\\
-
2\int_0^t M\dot{M}\|u\|_2^2ds
+
2\int_0^t \frac{\dot{a}}{a^{3}}(s) \|\nabla u(s)\|_2^2 ds
\\
+
\frac{n(p-1)}{2}\int_0^t \frac{\dot{a}}{a^{n(p-1)/2+1 }} (s)\int_\brn V(u)(s,x)dx ds
=
\widetilde{E}(u)(0).
\end{multline}

(4) We have  
\begin{multline}
\label{Lem-2-3000}
\partial_t^2|u|^2-2c^2a^{-2}\Re\nabla(\overline{u}\nabla u)-2|\partial_tu|^2
+2c^2a^{-2}|\nabla u|^2
+2c^2M^2|u|^2
\\
+2c^2\Re\left(\overline{u} h\right)=0
\end{multline}
and 
\begin{multline}
\label{Lem-2-4000}
\partial_t^2\|u\|^2_2-2\|\partial_tu\|^2_2
+2c^2a^{-2}\|\nabla u\|^2_2
+2c^2M^2\|u\|_2^2
+2c^2\Re \int_\brn \overline{u} h dx=0.
\end{multline}

(5) 
Under \eqref{f-V}, we have 
\begin{multline}
\label{Lem-2-8}
\frac{d^2}{dt^2}\|u\|_2^2-2\|\partial_t u\|_2^2+2c^2a^{-2}\|\nabla u\|_2^2+2c^2M^2\|u\|_2^2
+
2\lambda c^2a^{-n(p-1)/2} \|u\|_{p+1}^{p+1}=0.
\end{multline}
\end{lemma}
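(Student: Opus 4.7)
\emph{Overall strategy.} The whole lemma is a collection of energy identities obtained by the classical multiplier method with the two multipliers $2\,\overline{\partial_t u}$ (for parts (1)--(3)) and $2c^2\,\overline{u}$ (for parts (4)--(5)). The only care needed is to isolate the time-derivatives of the coefficients $a(t)$ and $M^2(t)$ so that they appear as explicit ``damping'' (or anti-damping) terms rather than hidden inside total $t$-derivatives; everything else is a routine integration by parts.

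\emph{Parts (1) and (2).} I would multiply the linear equation in \eqref{Cauchy-Linear} by $2\,\overline{\partial_t u}$ and take the real part. The three leading terms rearrange as $2c^{-2}\Re(\overline{\partial_t u}\,\partial_t^2 u)=c^{-2}\partial_t|\partial_t u|^2$; $-2a^{-2}\Re(\overline{\partial_t u}\,\Delta u)=\sum_{j=1}^n\partial_j e^j+\partial_t(a^{-2}|\nabla u|^2)+2a^{-3}\dot a|\nabla u|^2$ after integrating by parts in $x_j$ and pulling $\partial_t$ out of $a^{-2}|\nabla u|^2$; and $2M^2\Re(\overline{\partial_t u}\,u)=\partial_t(M^2|u|^2)-2M\dot M|u|^2$. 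Summing these gives exactly \eqref{Lem-1-Proof-50}; integrating in $x$ (the spatial divergence vanishes) and in $t$ yields \eqref{Lem-2-1000}. For part (2), under the sign hypotheses the two time-integrals $-2\int_0^tM\dot M\|u\|_2^2\,ds$ and $2\int_0^t a^{-3}\dot a\|\nabla u\|_2^2\,ds$ are nonnegative and can be kept on the left-hand side; after taking the supremum in $t$ and bounding the source by $2\Re\int_0^t\!\!\int_{\brn}\overline{\partial_t u}\,h\,dx\,ds\le 2c^{-1}\|\partial_t u\|_{L^\infty L^2}\cdot c\|h\|_{L^1 L^2}$ (absorbing the first factor into the quadratic left-hand side), the square root delivers \eqref{Ineq-EnergyEs-Formal}.

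\emph{Part (3).} With the power nonlinearity \eqref{f-V}, the source reduces to $h=\lambda\,a^{-n(p-1)/2}|u|^{p-1}u$, so
$2\Re(\overline{\partial_t u}\,h)=a^{-n(p-1)/2}\,\partial_t V(u)=\partial_t\bigl(a^{-n(p-1)/2}V(u)\bigr)+\tfrac{n(p-1)}{2}\,\tfrac{\dot a}{a^{n(p-1)/2+1}}V(u)$.
Substituting this into \eqref{Lem-1-Proof-50} converts the source into a contribution to the energy density and a damping term, which is precisely the passage from $(e^0,e^{n+1})$ to $(\widetilde e^{\,0},\widetilde e^{\,n+1})$; this gives \eqref{Lem-2-2000}, and integrating in $x$ and $t$ gives \eqref{Lem-2-5}.

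\emph{Parts (4) and (5).} I would multiply the linear equation by $2c^2\overline u$ and take the real part. This time $2\Re(\overline u\,\partial_t^2 u)=\partial_t^2|u|^2-2|\partial_t u|^2$, while integration by parts of $\overline u\,\Delta u$ yields $-2c^2a^{-2}\Re\nabla\!\cdot\!(\overline u\,\nabla u)+2c^2a^{-2}|\nabla u|^2$, and the mass term produces $2c^2M^2|u|^2$; these add up to \eqref{Lem-2-3000}, and integration over $\brn$ kills the divergence to give \eqref{Lem-2-4000}. For part (5), specializing $h$ as in \eqref{f-V} turns $\int_{\brn}\overline u\,h\,dx$ into $\lambda\,a^{-n(p-1)/2}\|u\|_{p+1}^{p+1}$, which, plugged into \eqref{Lem-2-4000}, is \eqref{Lem-2-8}. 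No delicate estimate is involved anywhere; the only potential obstacle is careful bookkeeping of the $\dot a/a$ and $\dot M\,M$ contributions that arise each time $\partial_t$ is commuted past the time-dependent coefficients.
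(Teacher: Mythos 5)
Your proposal is correct and follows essentially the same route as the paper: the multiplier $\overline{\partial_t u}$ gives the divergence identity \eqref{Lem-1-Proof-50}, integration yields \eqref{Lem-2-1000}, the source term is absorbed via Cauchy--Schwarz/Young to get \eqref{Ineq-EnergyEs-Formal}, the identity $2\Re(\overline{\partial_t u}h)=\partial_t(a^{-n(p-1)/2}V)+\tfrac{n(p-1)}{2}\dot a\,a^{-n(p-1)/2-1}V$ handles (3), and the multiplier $\overline{u}$ handles (4) and (5). All the coefficient bookkeeping ($\dot a$, $M\dot M$ terms) in your rearrangements checks out against the paper's proof.
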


%%%%%%%%%%%%%%%%%%%%%%%%%%%%
%  Proof of proposition-energy  %
%%%%%%%%%%%%%%%%%%%%%%%%%%%%

\begin{proof}
(1) 
Multiplying $\overline{\partial_t u}$ to the first equation in \eqref{Cauchy-Linear}, and taking its real part, we have the divergence form \eqref{Lem-1-Proof-50}.
Integrating both sides of \eqref{Lem-1-Proof-50}, we obtain \eqref{Lem-2-1000}.

(2)
By \eqref{Lem-2-1000}, we have 
\begin{multline*}
c^{-2}\|\partial_t u\|_{L^\infty((0,T),L^2(\brn))}^2
+
\|a^{-1}\nabla u\|_{L^\infty((0,T),L^2(\brn))}^2
+
\|Mu\|_{L^\infty((0,T),L^2(\brn))}^2
\\
+
\int_0^T \int_\brn e^{n+1} dx ds
\le 4E(u)(0)+8\int_0^T \int_\brn |\overline{\partial_t u} h| dx ds.
\end{multline*}
We obtain the required result by
\[
8\int_0^T \int_\brn |\overline{\partial_t u} h| dx ds
\le (2c^2)^{-1}\|\partial_t u\|_{L^\infty((0,T),L^2(\brn))}^2+32c^2\|h\|_{L^1((0,T),L^2(\brn))}^2.
\]

(3) 
By the assumption \eqref{f-V}, we have 
\beq
\label{Proof-Lem-2-500}
2\Re\left(\overline{\partial_t u} u\right)=\partial_t |u|^2,
\ \ 
2\Re\left\{\overline{\partial_t u} f(u)\right\}=\partial_t V(u),
\ \ 
\overline{u} f(u)=\lambda|u|^{p+1}
\eeq
and 
\[
2\Re(\overline{\partial_tu}h)
=
\partial_t(a^{-n(p-1)/2} V)+\frac{n(p-1)\dot{a} V}{2a^{n(p-1)/2+1}}.
\]
So that, the equation \eqref{Lem-1-Proof-50} is rewritten by 
\beq
\label{Proof-Lem-2-1000}
\partial_t(e^0+a^{-n(p-1)/2} V)+\sum_{j=1}^n \partial_j e^j+e^{n+1}+\frac{n(p-1)\dot{a} V}{2a^{n(p-1)/2+1}}=0,
\eeq
which is \eqref{Lem-2-2000}.
Integrating both sides, we obtain \eqref{Lem-2-5}.

(4)
Multiplying $\overline{u}$ to the first equation in \eqref{Cauchy-Linear} and taking its real part, we have 
\[
2\Re(\overline{u} \partial_t^2u)
-2c^2a^{-2}\Re(\overline{u} \Delta u)
+2c^2M^2|u|^2
+2c^2\Re(\overline{u} h)=0
\]
which is rewritten by \eqref{Lem-2-3000}, 
where we have used 
$2\Re(\overline{u} \partial_t^2u)=\partial_t^2 |u|^2-2|\partial_t u|^2$, and
$\overline{u} \Delta u=\nabla \left( \overline{u} \nabla u\right)-|\nabla u|^2$.

(5) The required result follows from \eqref{Lem-2-4000} and 
$\overline{u} h=\lambda a^{-n(p-1)/2} |u|^{p+1}$.
\end{proof}

\vspace{10pt}

We derive the integral equation corresponding to \eqref{Cauchy-Linear}.
The differential equation in \eqref{Cauchy-Linear} is transformed to
\beq
\label{Lem-8}
\partial_t^2 Fu(t,\xi)+\alpha(t,\xi)Fu(t,\xi)+c^2Fh(t,\xi)=0
\eeq
for $\xi\in \brn$ by the Fourier transform $F$,
where $\alpha$ is defined by 
\beq
\label{Def-Alpha}
\alpha(t)=\alpha(t,\xi):=\frac{c^2|\xi|^2}{a^2(t)}+c^2M^2(t).
\eeq
Let $\rho_0$ and $\rho_1$ be the solutions of the differential equation 
\beq
\label{Def-Rho-j}
\partial_t^2\rho_j(t,\xi)+\alpha(t,\xi) \rho_j(t,\xi)=0,
\ \ 
\rho_j(0,\xi)=\delta_{j0},
\ \ 
\partial_t \rho_j(0,\xi)=\delta_{j1},
\eeq
where $\delta_{jk}$ denotes Kronecker's delta, i.e., 
$\delta_{00}=\delta_{11}=1$ and $\delta_{01}=\delta_{10}=0$. 
We have the following elementary properties of $\rho_0$ and $\rho_1$,
we prove them for completeness.

\begin{lemma}
\label{Lem-4}
The functions $\rho_0$ and $\rho_1$ with \eqref{Def-Rho-j} satisfy the followings.

(1) 
$\det
\begin{pmatrix}
\rho_0 & \rho_1 \\
\partial_t \rho_0 & \partial_t \rho_1 
\end{pmatrix}
=1$.

(2) 
If $\alpha$ satisfies $\alpha\ge 0$ and $\partial_t{\alpha}\le 0$, 
then the following estimates hold.
\beq
\label{Lem-4-1000}
|\rho_0(t,\cdot)|\le \sqrt{\frac{\alpha(0,\cdot)}{\alpha(t,\cdot)} },
\ \ 
|\partial_t\rho_0(t,\cdot)|\le \sqrt{\alpha(0,\cdot)},
\ \ 
|\rho_1(t,\cdot)|\le\frac{1}{ \sqrt{\alpha(t,\cdot)} },
\ \ 
|\partial_t\rho_1|\le 1.
\eeq
\end{lemma}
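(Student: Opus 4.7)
}
For part (1), the plan is to recognize the determinant as the Wronskian $W(t,\xi):=\rho_0\partial_t\rho_1-\rho_1\partial_t\rho_0$ of the two solutions to the second-order linear ODE \eqref{Def-Rho-j}, which has no first-order term. Differentiating and substituting $\partial_t^2\rho_j=-\alpha\rho_j$ gives $\partial_tW\equiv0$, so $W$ is independent of $t$. Evaluating at $t=0$ via the initial conditions $\rho_j(0)=\delta_{j0}$, $\partial_t\rho_j(0)=\delta_{j1}$ yields $W(0,\xi)=1$, which settles (1).

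For part (2), I would introduce the pointwise-in-$\xi$ energy
\[
E_j(t,\xi):=|\partial_t\rho_j(t,\xi)|^2+\alpha(t,\xi)|\rho_j(t,\xi)|^2,
\]
which is the natural energy functional for \eqref{Def-Rho-j} and mirrors the energy used in Lemma \ref{Lem-2}. Differentiating in $t$, the cross terms cancel against $\partial_t^2\rho_j=-\alpha\rho_j$, leaving
\[
\partial_t E_j(t,\xi)=\partial_t\alpha(t,\xi)\,|\rho_j(t,\xi)|^2.
\]
Since $\partial_t\alpha\le0$ by hypothesis, $E_j(t,\xi)$ is monotone nonincreasing in $t$ for each fixed $\xi$.

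The four bounds in \eqref{Lem-4-1000} then follow immediately from the initial values of $E_j$ and the inequalities $|\partial_t\rho_j|^2\le E_j$ and $\alpha|\rho_j|^2\le E_j$. For $\rho_1$, the initial conditions give $E_1(0,\xi)=1$, yielding $|\partial_t\rho_1|\le1$ and $|\rho_1|\le1/\sqrt{\alpha(t,\cdot)}$ (using $\alpha\ge0$ so the square roots are real). For $\rho_0$, the initial conditions give $E_0(0,\xi)=\alpha(0,\xi)$, yielding $|\partial_t\rho_0|\le\sqrt{\alpha(0,\cdot)}$ and $|\rho_0|\le\sqrt{\alpha(0,\cdot)/\alpha(t,\cdot)}$.

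There is no real obstacle here; the only mild point to watch is that the bounds involving $1/\sqrt{\alpha(t,\cdot)}$ are to be read in the natural way when $\alpha(t,\xi)=0$ (the bound becoming vacuous and the corresponding solution estimate being recovered from $E_j$ directly), but since the statement is written in this compact form, no additional commentary is needed beyond the monotonicity of $E_j$.
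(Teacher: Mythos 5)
Your proposal is correct and follows essentially the same route as the paper: the Wronskian computation for (1), and for (2) the energy $|\partial_t\rho_j|^2+\alpha|\rho_j|^2$ whose derivative is $\partial_t\alpha\,|\rho_j|^2\le0$ (the paper writes this as an integrated identity rather than monotonicity, but it is the same argument), followed by evaluation of the initial energies $E_1(0)=1$ and $E_0(0)=\alpha(0)$.
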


\begin{proof}
(1) 
The result follows from 
$\partial_t \det
\begin{pmatrix}
\rho_0 & \rho_1 \\
\partial_t \rho_0 & \partial_t \rho_1 
\end{pmatrix}
=\rho_0\partial_t^2\rho_1
-\rho_1\partial_t^2\rho_0
=0$
by \eqref{Def-Rho-j}.

(2) 
Multiplying $\overline{\partial_t\rho_j}$ to the differential equation in \eqref{Def-Rho-j}, 
and taking its real part, we have 
\[
\partial_t
\left(|\partial_t \rho_j|^2+\alpha|\rho_j|^2\right)-\partial_t\alpha |\rho_j|^2=0
\]
for $j=0,1$.
Integrating the both sides, we have 
\[
|\partial_t \rho_j(t)|^2+\alpha(t)|\rho_j(t)|^2-\int_0^t \partial_t \alpha(s)|\rho_j(s)|^2 ds
=
|\partial_t\rho_j(0)|^2+\alpha(0)|\rho_j(0)|^2,
\]
where $\rho_j(t)=\rho_j(t,\cdot)$.
So that, we obtain the required results by the initial conditions in \eqref{Def-Rho-j} under $\alpha>0$ and $\partial_t \alpha\le0$.
\end{proof}

For any given functions $A$ and $B$ on $\brn$, 
the solution of the Cauchy problem 
\beq
\label{ODE-Rho}
\partial_t^2\rho(t,\xi)+\alpha(t,\xi)\rho(t,\xi)+\beta(t,\xi)=0,
\ \ 
\rho(0,\xi)=A(\xi),
\ \ 
\partial_t \rho(0,\xi)=B(\xi)
\eeq
is represented by  
\beq
\label{Def-Rho}
\rho(t,\xi):=\rho_0(t,\xi)A(\xi)+\rho_1(t,\xi)B(\xi)-\int_0^t\rho_2(t,s,\xi) \beta(s,\xi)ds,
\eeq
where we have put  
\[
\rho_2(t,s,\xi):=\rho_1(t,\xi)\rho_0(s,\xi)-\rho_0(t,\xi)\rho_1(s,\xi),
\]
and the differential equation \eqref{ODE-Rho} follows from \eqref{Def-Rho} by (1) in Lemma \ref{Lem-4}.
Putting $\rho:=Fu$, $\beta:=c^2Fh$, $A:=Fu_0$ and $B:=Fu_1$, 
the solution $u$ of \eqref{Cauchy-Linear} 
is represented by the integral equation 
\beq
\label{Cauchy-Linear-Int}
u(t)=K_0(t)u_0+K_1(t) u_1-c^2\int_0^t K_2(t,s) h(s) ds,
\eeq
where $K_0$, $K_1$ and $K_2$ are the operators defined by 
\beq
\label{Def-K}
K_0(t):=F^{-1}\rho_0(t) F,
\ \ 
K_1(t):=F^{-1}\rho_1(t) F,
\ \ 
K_2(t,s):=F^{-1}\rho_2(t,s) F.
\eeq

To show fundamental estimates for the operators $K_0$, $K_1$ and $K_2$ in Lemma \ref{Lem-7}, below,
we prepare the following two lemmas.
Put $\langle \xi \rangle:=\sqrt{1+|\xi|^2}$ for $\xi\in \brn$.

\begin{lemma}
\label{Lem-5}
For $T>0$, let $a\in C^2([0,T),(0,\infty))$ and $M\in C^1([0,T),(0,\infty))$ 
with $\dot{a}\ge0$, $\dot{M}\le 0$, $a_0:=a(0)$, and $M_0:=M(0)$.
Put 
\beq
\label{Def-Eta}
\eta(t):=\frac{M_0a(t)}{M(t)a_0}
\eeq
for $0\le t<T$.
Put $M_\ast:=\inf_{0\le t<T} M$.
Then $\alpha$ defined by \eqref{Def-Alpha} satisfies the followings on $[0,T)\times \brn$. 

(1) 
\ \ 
$c\min\{a_0^{-1},M_0\} \langle \xi \rangle\le \sqrt{\alpha(0,\xi)}\le c\max\{a_0^{-1},M_0\} \langle \xi \rangle$.

(2) 
\ \ 
$\partial_t\alpha\le 0$.

(3)
\ \ 
$\eta(t)\ge \sqrt{\frac{\alpha(0,\xi)}{\alpha(t,\xi)} }$.

(4) 
\ \ 
$\sqrt{\alpha(t,\xi)}\ge cM_\ast$.

(5) 
\ \ 
$\sqrt{ \frac{\alpha(0,\xi)}{\alpha(t,\xi)} } 
\le 
\min\left\{
\eta(t), \frac{ \max\{a_0^{-1},M_0\}\langle \xi \rangle}{M_\ast}
\right\}$.

(6) 
\ \ 
$\frac{1}{\sqrt{\alpha(t,\xi)}} 
\le 
\frac{1}{c\min\{a_0^{-1},M_0\}}
\min
\left\{
\frac{\eta(t)}{ \langle\xi\rangle}, 
\frac{ \max\{a_0^{-1},M_0\}}{M_\ast } 
\right\}$.

(7) 
\ \ 
$\dot{\eta}\ge0$. 
\end{lemma}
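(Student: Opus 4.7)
My plan is to treat the seven claims roughly in the stated order, with (3) being the substantive one that powers (5) and (6); the others are essentially bookkeeping from the definitions of $\alpha$ and $\eta$.

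For (1) I would substitute $t=0$ directly into \eqref{Def-Alpha} to get $\alpha(0,\xi)=c^2(|\xi|^2/a_0^2+M_0^2)$, then estimate $|\xi|^2/a_0^2+M_0^2$ both above and below by $\min\{a_0^{-1},M_0\}^2\langle\xi\rangle^2$ and $\max\{a_0^{-1},M_0\}^2\langle\xi\rangle^2$ using $\langle\xi\rangle^2=1+|\xi|^2$. For (2) I would differentiate \eqref{Def-Alpha} to obtain $\partial_t\alpha=-2c^2|\xi|^2\dot{a}/a^3+2c^2M\dot{M}$; the first term is $\le 0$ by $\dot{a}\ge0$, and the second by $M>0$ together with $\dot{M}\le0$. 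For (4) I would simply drop the non-negative $c^2|\xi|^2/a^2$ term in $\alpha(t,\xi)$ and use $M(t)\ge M_\ast$. For (7) I would differentiate $\eta(t)=M_0a(t)/(M(t)a_0)$ to get the quotient $M_0\{\dot{a}M-a\dot{M}\}/(a_0M^2)$ and observe both summands in the numerator are non-negative under the hypotheses.

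The heart of the lemma is (3). My plan is to clear denominators: the claim $\eta(t)^2\ge\alpha(0,\xi)/\alpha(t,\xi)$ is equivalent, after multiplying by the positive quantity $M^2(t)a_0^2\alpha(t,\xi)$, to
\[
M_0^2a^2(t)\bigl(|\xi|^2/a^2(t)+M^2(t)\bigr)\ge M^2(t)a_0^2\bigl(|\xi|^2/a_0^2+M_0^2\bigr).
\]
Rearranging, this reduces to the single inequality $|\xi|^2(M_0^2-M^2(t))\ge M_0^2M^2(t)(a_0^2-a^2(t))$. Since $\dot{M}\le0$ and $\dot{a}\ge0$ force $M_0^2-M^2(t)\ge0$ and $a_0^2-a^2(t)\le0$, the left side is non-negative while the right is non-positive, which closes the argument. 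This monotonicity-based comparison is the main (if modest) obstacle, since (1)--(2) and (4)--(7) are essentially algebraic manipulations of the definitions.

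Finally, (5) and (6) I would deduce by combining earlier parts in two ways. For (5), one bound is (3) itself; the other comes from estimating the numerator using the upper bound in (1) and the denominator via (4), yielding $\sqrt{\alpha(0,\xi)/\alpha(t,\xi)}\le\max\{a_0^{-1},M_0\}\langle\xi\rangle/M_\ast$, and then taking the minimum. For (6), I would use (3) to write $1/\sqrt{\alpha(t,\xi)}\le\eta(t)/\sqrt{\alpha(0,\xi)}$ and apply the lower bound from (1) to get the $\eta(t)/\langle\xi\rangle$ term, while (4) gives $1/\sqrt{\alpha(t,\xi)}\le 1/(cM_\ast)$, which I would rewrite as $\max\{a_0^{-1},M_0\}/(c\min\{a_0^{-1},M_0\}M_\ast)$ using $\max\ge\min$ to produce the common prefactor $1/(c\min\{a_0^{-1},M_0\})$ announced in the statement. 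Then the minimum of the two bounds gives the claim.
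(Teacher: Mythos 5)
Your proposal is correct and follows essentially the same route as the paper: parts (1), (2), (4), (7) are the same direct computations, part (3) rests on the same two monotonicity facts $M(t)\le M_0$ and $a(t)\ge a_0$ (you clear denominators where the paper chains two inequalities, but the content is identical), and (5)--(6) are obtained by the same combination of the earlier parts. No gaps.
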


\begin{proof}
(1)
We have 
$\alpha(0)\le c^2\max\{a_0^{-2},M_0^2\}\langle \xi\rangle^2$ 
by 
$\alpha(0)\le c^2a_0^{-2}(|\xi|^2+1)$ when $a_0^2M_0^2\le 1$, 
and 
$\alpha(0)\le c^2M_0^{2}(|\xi|^2+1)$ when $a_0^2M_0^2\ge 1$.
We also have 
$\alpha(0)\ge c^2\min\{a_0^{-2},M_0^2\}\langle \xi\rangle^2$ 
by 
$\alpha(0)\ge c^2a_0^{-2}(|\xi|^2+1)$ when $a_0^2M_0^2\ge 1$, 
and 
$\alpha(0)\ge c^2M_0^{2}(|\xi|^2+1)$ when $a_0^2M_0^2\le 1$.
So that, we obtain the required result.

(2)
The result follows from $\partial_t\alpha=-2c^2a^{-3} \dot{a} |\xi|^2+2c^2M \dot{M}\le 0$ 
by   
$\dot{a}\ge0$ and $\dot{M}\le 0$.

(3) 
We have 
\[
\alpha(0,\xi)
\le 
\frac{c^2M_0^2}{M^2}
\left\{
\frac{|\xi|^2}{a_0^2} +M^2
\right\}
\le 
\frac{M_0^2a^2}{M^2a_0^2}
\,
\alpha(t,\xi)
\]
by $M\le M_0$ and $a_0\le a$, 
which yields the required result.

(4) 
The result easily follows from $\alpha(t,\xi)\ge c^2M^2$.

(5) 
The result follows from (1), (3), (4) and the definition of $\eta$.

(6) 
The result follows from (1) and (5).

(7) 
Since we have  
\[
\frac{d \eta^2}{dt}
=
\frac{2M_0^2 a^2}{M^2 a_0^2} 
\left(
\frac{\dot{a}}{a}-\frac{\dot{M}}{M}
\right)
\ge0
\]
by $\dot{a}\ge0$ and $\dot{M}\le0$, 
we obtain the result $\dot{\eta}\ge0$.
\end{proof}

\begin{lemma}
\label{Lem-6}
Under the assumption in Lemmas \ref{Lem-4} and \ref{Lem-5},
put
\beq
\label{Def-N}
\begin{array}{ll}
N_1
:=
\frac{\max\{a_0^{-1},M_0\} }{M_\ast},
& 
N_2
:=
\max\{a_0^{-1},M_0\},
\\ 
N_3
:=
\frac{1}{\min\{a_0^{-1},M_0\} },
& 
N_4
:=
\frac{\max\{a_0^{-1},M_0\} }
{M_\ast \min\{a_0^{-1},M_0\} }.
\end{array}
\eeq
Then $\rho_0$ and $\rho_1$ defined by \eqref{Def-Rho-j} satisfy the followings for $t\ge0$ and $\xi\in \brn$.

(1)
$|\rho_0(t,\xi)|\le \min\{\eta(t),N_1\langle \xi\rangle\}$.

(2) 
$|\partial_t\rho_0(t,\xi)|\le cN_2\langle \xi\rangle$.

(3) 
$|\rho_1(t,\xi)|\le \frac{1}{c}\min\left\{\frac{N_3\eta(t)}{\langle \xi\rangle}, N_4\right\}$.

(4) 
$|\partial_t\rho_1|\le 1$.
\end{lemma}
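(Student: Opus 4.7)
The plan is to observe that Lemma \ref{Lem-6} follows by a direct combination of the generic bounds for $\rho_0,\rho_1$ in Lemma \ref{Lem-4} with the pointwise estimates on $\alpha(t,\xi)$ collected in Lemma \ref{Lem-5}. So the work splits into (a) checking that the hypotheses of Lemma \ref{Lem-4} are in force under the current assumptions, and (b) substituting the Lemma \ref{Lem-5} bounds into the Lemma \ref{Lem-4} inequalities and reading off the constants $N_1,\dots,N_4$ as defined in \eqref{Def-N}.

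For (a), under the hypothesis $M\in C^1([0,T),(0,\infty))$ we have $M^2>0$, and by definition \eqref{Def-Alpha} of $\alpha$ this immediately gives $\alpha(t,\xi)>0$. The monotonicity $\partial_t\alpha\le 0$ is exactly part (2) of Lemma \ref{Lem-5}, which was proved from $\dot a\ge0$ and $\dot M\le 0$. Thus both prerequisites of part (2) of Lemma \ref{Lem-4} are satisfied, and the four inequalities \eqref{Lem-4-1000} are available throughout $[0,T)\times\brn$.

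For (b), I would address the four claims in order. Claim (1) follows from $|\rho_0|\le\sqrt{\alpha(0)/\alpha(t)}$ (Lemma \ref{Lem-4}) combined with part (5) of Lemma \ref{Lem-5}, which bounds the square-root quotient by $\min\{\eta(t),\max\{a_0^{-1},M_0\}\langle\xi\rangle/M_\ast\}$; by definition $N_1=\max\{a_0^{-1},M_0\}/M_\ast$, so the bound becomes $\min\{\eta(t),N_1\langle\xi\rangle\}$. Claim (2) follows from $|\partial_t\rho_0|\le\sqrt{\alpha(0)}$ together with part (1) of Lemma \ref{Lem-5}, giving $\sqrt{\alpha(0)}\le c\max\{a_0^{-1},M_0\}\langle\xi\rangle=cN_2\langle\xi\rangle$. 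Claim (3) follows from $|\rho_1|\le 1/\sqrt{\alpha(t)}$ combined with part (6) of Lemma \ref{Lem-5}; writing the prefactor $1/(c\min\{a_0^{-1},M_0\})=N_3/c$ and noticing that $N_3\cdot(\max\{a_0^{-1},M_0\}/M_\ast)=N_4$, the right-hand side becomes $c^{-1}\min\{N_3\eta(t)/\langle\xi\rangle,N_4\}$. Claim (4) is the direct restatement of the last inequality in \eqref{Lem-4-1000}.

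There is essentially no analytic obstacle in this proof; all the work has been packaged into Lemmas \ref{Lem-4} and \ref{Lem-5}. The only thing to be careful with is the bookkeeping of the constants: one has to verify that the two forms of $N_1,\dots,N_4$ in \eqref{Def-N} match exactly the quantities that emerge when the minima in Lemma \ref{Lem-5} parts (5) and (6) are rewritten. This is a short algebraic check rather than a genuine difficulty.
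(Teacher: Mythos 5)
Your proposal is correct and follows exactly the paper's own argument: verify that $\alpha>0$ and $\partial_t\alpha\le0$ so that Lemma \ref{Lem-4} applies, then substitute the bounds from parts (1), (5), (6) of Lemma \ref{Lem-5} into \eqref{Lem-4-1000} and identify the constants $N_1,\dots,N_4$. The constant bookkeeping you carry out (in particular $N_3\cdot\max\{a_0^{-1},M_0\}/M_\ast=N_4$) is the only content beyond citation, and you have it right.
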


\begin{proof}
We are able to apply Lemma \ref{Lem-4} since $\alpha\ge 0$ and $\partial_t\alpha\le 0$ hold by the definition of $\alpha$ in \eqref{Def-Alpha} and (2) in Lemma \ref{Lem-5}.
The result (1) follows from \eqref{Lem-4-1000} and (5) in Lemma \ref{Lem-5}.
Similarly, the results (2), (3) follow from (1), (6) in Lemma \ref{Lem-5}, respectively.
The result (4) is from \eqref{Lem-4-1000} 
\end{proof}

We show fundamental estimates for the operators $K_0$, $K_1$ and $K_2$ defined by \eqref{Def-K}.

\begin{lemma}
\label{Lem-7}
Under the assumption in Lemma \ref{Lem-6},
the operators $K_0$, $K_1$ and $K_2$ satisfy the followings.
\begin{eqnarray*}
&(1)& 
\|K_0(t)\phi\|_{L^2(\brn)}\le \min\left\{\eta(t)\|\phi\|_{L^2(\brn)},N_1\|\phi\|_{H^1(\brn)}\right\}.
\\
&(2)& 
\|\partial_t K_0(t)\phi\|_{L^2(\brn)}\le cN_2\|\phi\|_{H^1(\brn)}.
\\
&(3)& 
\|K_1(t)\phi\|_{L^2(\brn)}\le \frac{1}{c}\min\left\{N_3\eta(t)\|\phi\|_{H^{-1}(\brn)},N_4\|\phi\|_{L^2(\brn)}\right\}.
\\
&(4)& 
\|\partial_t K_1(t)\phi\|_{L^2(\brn)}\le \|\phi\|_{L^2(\brn)}.
\\
&(5)& 
\|K_1(t)K_0(s)\phi\|_{L^2(\brn)}
\le 
\frac{1}{c}\min\left\{
N_3\eta(t)\eta(s)\|\phi\|_{H^{-1}(\brn)},
N_1N_3\eta(t)\|\phi\|_{L^2(\brn)},\right.
\\
&&
\ \hspace{6cm}\left. 
N_4\eta(s)\|\phi\|_{L^2(\brn)},
N_1N_4\|\phi\|_{H^1(\brn)}
\right\}.
\\
&(6)& 
\|\partial_t K_1(t)K_0(s)\phi\|_{L^2(\brn)}
\le 
\min\left\{
\eta(s)\|\phi\|_{L^2(\brn)},
N_1\|\phi\|_{H^1(\brn)} 
\right\}.
\\
&(7)& 
\|\partial_t K_0(t)K_1(s)\phi\|_{L^2(\brn)}
\le 
\min\left\{
N_2N_3\eta(s)\|\phi\|_{L^2(\brn)},
N_2N_4\|\phi\|_{H^1(\brn)}
\right\}.
\\
&(8)& 
\|K_2(t,s)\phi\|_{L^2(\brn)}
\le 
\frac{2}{c}\min\left\{
N_3\eta(t)\eta(s)\|\phi\|_{H^{-1}(\brn)},
\max\{N_1N_3,N_4\} \eta(t)\|\phi\|_{L^2(\brn)},\right.
\\
&&
\ \hspace{5cm}
\left. 
\max\{N_1N_3,N_4\}\eta(s)\|\phi\|_{L^2(\brn)},
N_1N_4\|\phi\|_{H^1(\brn)}
\right\}.
\\
&(9)& 
\|\partial_t K_2(t,s)\phi\|_{L^2(\brn)}
\le 
2\min
\left\{
\max\{1,N_2N_3\}\eta(s)\|\phi\|_{L^2(\brn)},
\right.
\\
&&
\ \hspace{5cm}
\left.
\max\{N_1,N_2N_4\}\|\phi\|_{H^1(\brn)}
\right\}.
\end{eqnarray*}
\end{lemma}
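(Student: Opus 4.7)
The overall strategy is to exploit the fact that $K_0$, $K_1$, $K_2$ are Fourier multipliers, so by Plancherel's theorem every $L^2$ operator bound reduces to a pointwise bound on the symbol, with any powers of $\langle\xi\rangle$ absorbed into a Sobolev norm of $\phi$. Since pointwise bounds on $\rho_0$, $\rho_1$ and their time derivatives are already tabulated in Lemma \ref{Lem-6}, what remains is a systematic bookkeeping of how each symbol-level bound translates into an operator-norm bound.

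For the single-operator estimates (1)--(4), I would write, for instance, $\|K_0(t)\phi\|_{L^2(\brn)}=\|\rho_0(t,\cdot)\widehat\phi\|_{L^2(\brn)}$ and then insert, respectively, $|\rho_0(t,\xi)|\le\eta(t)$ and $|\rho_0(t,\xi)|\le N_1\langle\xi\rangle$ from Lemma \ref{Lem-6}(1), which deposit the two alternatives in the minimum appearing in (1). Estimates (2), (3), (4) proceed identically, invoking parts (2), (3), (4) of Lemma \ref{Lem-6}; the weight $\langle\xi\rangle^{-1}$ in the second bound on $\rho_1$ produces the $\|\phi\|_{H^{-1}}$ factor, the weight $\langle\xi\rangle$ in the bound on $\partial_t\rho_0$ produces $\|\phi\|_{H^1}$, and the unit bound on $\partial_t\rho_1$ leaves $\|\phi\|_{L^2}$ untouched.

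For the composite estimates (5)--(7), I would use that $K_i(t)K_j(s)=F^{-1}\bigl(\rho_i(t,\xi)\rho_j(s,\xi)\bigr)F$. Each of the four terms inside the minimum in (5) arises from an independent choice of one of the two bounds available on $\rho_1(t,\xi)$ and one of the two bounds on $\rho_0(s,\xi)$: bounds involving $\langle\xi\rangle^{-1}$ cost $\|\phi\|_{H^{-1}}$, bounds involving $\langle\xi\rangle$ cost $\|\phi\|_{H^1}$, and $\xi$-independent bounds leave $\|\phi\|_{L^2}$. The same exhaustion strategy yields (6) from $|\partial_t\rho_1|\le1$ combined with the two bounds on $\rho_0(s,\xi)$, and yields (7) from the bound on $\partial_t\rho_0(t,\xi)$ combined with the two bounds on $\rho_1(s,\xi)$.

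Finally, for (8) and (9), I would exploit the algebraic identity $K_2(t,s)=K_1(t)K_0(s)-K_0(t)K_1(s)$ that comes directly from $\rho_2=\rho_1(t)\rho_0(s)-\rho_0(t)\rho_1(s)$, and likewise $\partial_t K_2(t,s)=\partial_t K_1(t)K_0(s)-\partial_t K_0(t)K_1(s)$. Then (8) follows by the triangle inequality from (5) together with its $(t\leftrightarrow s)$ analogue, and (9) follows similarly from (6) and (7); the constant $\max\{N_1N_3,N_4\}$ in (8) and $\max\{1,N_2N_3\}$, $\max\{N_1,N_2N_4\}$ in (9) appear exactly as the larger of the two constants produced by the respective terms on the same Sobolev scale. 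The only place any care is required is in matching the four alternatives of (8) across the two halves of the decomposition so that a single line survives for each target norm; this matching is essentially cosmetic and no new analytic input beyond Lemma \ref{Lem-6} is needed.
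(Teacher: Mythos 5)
Your proposal is correct and follows essentially the same route as the paper: reduce each operator bound to a pointwise symbol bound via the $L^2$-invariance of the Fourier transform, read off (1)--(4) from the corresponding parts of Lemma \ref{Lem-6}, obtain (5)--(7) by multiplying the available symbol bounds, and deduce (8)--(9) from the decomposition $\rho_2(t,s)=\rho_1(t)\rho_0(s)-\rho_0(t)\rho_1(s)$ with the triangle inequality, which is exactly where the factors of $2$ and the maxima of constants arise. No discrepancy with the paper's argument.
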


\begin{proof}
The result (1) follows from the $L^2$-invariance of the Fourier transform and (1) in Lemma \ref{Lem-6}.
Similarly, (2), (3) and (4) follow from (2), (3) and (4) in Lemma \ref{Lem-6}, respectively.
The result (5) follows from the combination of (1) and (3).
Similarly, (6) and (7) follow from the combination of (1) and (4), and the combination of (2) and (3), respectively.
The result (8) follows from (5), and the result (9) follows from (6) and (7).
\end{proof}

We prepare the following properties of $a$ and $M$ defined by \eqref{Def-a} and \eqref{M-FLRW}.

\begin{lemma}
\label{Lem-10}
Let $H\in \br$ and $\sigma\in \br$.
Let $T_0$ and $a$ be given by \eqref{R-Def-T_0} and \eqref{Def-a}.
Then the following results hold.
\begin{eqnarray*}
(1) 
&&
\frac{\dot{a}}{{a}}=H\left(\frac{a}{a_0}\right)^{-n(1+\sigma)/2}.
\\
(2) 
&&
\frac{\ddot{a}}{{a}}=H^2\left(\frac{a}{a_0}\right)^{-n(1+\sigma)}
\left\{
1-\frac{n(1+\sigma)}{2}
\right\}.
\\
(3) 
&&
\frac{d}{dt} \left(\frac{\dot{a}}{{a}}\right)
=
-\frac{n(1+\sigma)H^2}{2}
\left(\frac{a}{a_0}\right)^{-n(1+\sigma)}.
\end{eqnarray*}
\end{lemma}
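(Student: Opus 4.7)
The plan is a direct computation that handles the two cases of \eqref{Def-a} separately, then uses identity (1) to derive (3), and combines the two to obtain (2).

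First, I would handle the case $\sigma=-1$, which is immediate: since $a(t)=a_0\exp(Ht)$, one has $\dot{a}/a=H$ and $\ddot{a}/a=H^2$, while $(a/a_0)^{-n(1+\sigma)/2}=(a/a_0)^0=1$ and $\{1-n(1+\sigma)/2\}=1$, so (1), (2), (3) all check by inspection.

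For the main case $\sigma\ne-1$, I would set $b(t):=1+n(1+\sigma)Ht/2$ so that $a=a_0 b^{2/n(1+\sigma)}$ and $\dot{b}=n(1+\sigma)H/2$. A direct differentiation gives
\[
\dot{a}=a_0\cdot\frac{2}{n(1+\sigma)}\,b^{2/n(1+\sigma)-1}\cdot\dot{b}=a_0 H\, b^{2/n(1+\sigma)-1},
\]
so $\dot{a}/a=H b^{-1}$. Since $(a/a_0)^{-n(1+\sigma)/2}=b^{-1}$ from the definition of $a$, assertion (1) follows.

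For assertion (3), I would differentiate (1):
\[
\frac{d}{dt}\!\left(\frac{\dot{a}}{a}\right)=H\cdot\frac{d}{dt}\bigl[(a/a_0)^{-n(1+\sigma)/2}\bigr]=-\frac{n(1+\sigma)H}{2}\,(a/a_0)^{-n(1+\sigma)/2}\cdot\frac{\dot{a}}{a},
\]
and then insert (1) a second time to replace $\dot{a}/a$, producing the factor $H(a/a_0)^{-n(1+\sigma)/2}$ and combining exponents to yield $-n(1+\sigma)H^2/2\cdot(a/a_0)^{-n(1+\sigma)}$. Finally, for (2) I would use the elementary identity $\ddot{a}/a=\frac{d}{dt}(\dot{a}/a)+(\dot{a}/a)^2$, substitute (1) squared and (3), and factor out the common $H^2(a/a_0)^{-n(1+\sigma)}$ to obtain the claim. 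No genuine obstacle is present here; the only thing to be careful about is bookkeeping of the exponent $2/n(1+\sigma)$ versus $-n(1+\sigma)/2$ so that the two cancellations in (1) and (3) are done consistently.
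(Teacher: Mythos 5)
Your proof is correct and follows essentially the same route as the paper: a direct differentiation of \eqref{Def-a} in the two cases to get (1), followed by routine manipulation for (2) and (3). The only cosmetic difference is that you obtain (2) from (1) and (3) via the identity $\ddot{a}/a=\frac{d}{dt}(\dot{a}/a)+(\dot{a}/a)^2$, whereas the paper computes $\ddot{a}=\frac{\dot{a}^2}{a}\{1-\frac{n(1+\sigma)}{2}\}$ directly; both are equivalent bookkeeping.
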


\begin{proof}
The result (1) follows from $\dot{a}=Ha$ if $\sigma=-1$, and $\dot{a}=Ha(a/a_0)^{-n(1+\sigma)/2}$ if $\sigma\neq-1$.
The results (2) and (3) follow from 
\[
\ddot{a}=\frac{\dot{a}^2}{a}
\left\{
1-\frac{n(1+\sigma)}{2}
\right\}
=
aH^2\left(\frac{a}{a_0}\right)^{-n(1+\sigma)}
\left\{
1-\frac{n(1+\sigma)}{2}
\right\}
\]
and  
\[
\frac{d}{dt} \left(\frac{\dot{a}}{{a}}\right)
=
-\frac{n(1+\sigma)H}{2}
\left(\frac{a}{a_0}\right)^{-n(1+\sigma)/2-1}
\frac{\dot{a}}{a_0}
=
-\frac{n(1+\sigma)H^2}{2}
\left(\frac{a}{a_0}\right)^{-n(1+\sigma)}
\]
by (1), respectively.
\end{proof}

\begin{lemma}
\label{Lem-11}
Let $m\ge0$, $H\in \br$ and $\sigma\in \br$.
Let $T_0$, $a$ and $M$ be given by \eqref{R-Def-T_0}, \eqref{Def-a} and \eqref{M-FLRW}.
Then the following results hold.

(1) 
\ $\displaystyle M^2
=
m^2+\sigma\left(\frac{nH}{2c}\right)^2
\left\{1+\frac{n(1+\sigma)Ht}{2}\right\}^{-2}$.

(2) 
\ $M\dot{M}
=
-c\sigma(1+\sigma)
\left(\frac{nH}{2c}\right)^3
\left\{ 1+\frac{n(1+\sigma)Ht}{2} \right\}^{-3}$.

(3) 
\ $M^2$ satisfies 
\[
\begin{array}{ll}
(i) \ \ M^2=m^2 & \mbox{if }\ H=0, \ \mbox{or}\ \sigma=0,
\\
(ii)\ \ M^2=m^2-\left(\frac{nH}{2c}\right)^2 & \mbox{if }\ H\neq0, \ \sigma=-1,
\\
(iii)\ \ m^2<M^2\le m^2+\sigma\left(\frac{nH}{2c}\right)^2 & \mbox{if }\ H>0, \ \sigma>0,
\\
(iv)\ \ m^2+\sigma\left(\frac{nH}{2c}\right)^2\le M^2<m^2 & \mbox{if }\ (1+\sigma)H>0, \ \sigma<0,
\\
(v)\ \ m^2+\sigma\left(\frac{nH}{2c}\right)^2\le M^2\to \infty \ (t\to T_0) & \mbox{if }\ H<0, \ \sigma>0,
\\
(vi)\ \ m^2+\sigma\left(\frac{nH}{2c}\right)^2\ge M^2\to -\infty \ (t\to T_0) & \mbox{if }\ (1+\sigma)H<0, \ \sigma<0.
\end{array}
\]

(4) 
\ $M^2\ge0$ holds if and only if $H$ and $\sigma$ satisfy one of the following conditions.
\[
\begin{array}{ll}
(i) & H=0, \ \mbox{or}\ \sigma=0,\ m\ge0.
\\
(ii) & H\neq0, \ \sigma>0, \ m\ge0.
\\
(iii) & H\neq0, \ \sigma=-1, \ m\ge \frac{n|H|}{2c}.
\\
(iv) & (1+\sigma)H>0, \ \sigma< 0, \ m\ge \frac{\sqrt{|\sigma| } n|H|}{2c}.
\end{array}
\]

(5)
\ If the case (vi) in (3), and $m>\sqrt{|\sigma|}n|H|/2c$ hold, 
then $0<T_1<T_0<\infty$, $M(T_1)=0$, and $M^2$ is a strictly monotone-decreasing function with $M^2(t)>0$ 
for $0\le t<T_1$,
where $T_1$ is defined by \eqref{Def-T1}.
\end{lemma}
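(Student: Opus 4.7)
The plan is to prove (1) and (2) by direct computation using Lemma \ref{Lem-10}, and then reduce the remaining assertions to elementary monotonicity analysis of the resulting closed-form expression for $M^2$.

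For (1), the strategy is to substitute the identities $\dot a/a = H(a/a_0)^{-n(1+\sigma)/2}$ and $\ddot a/a = H^2(a/a_0)^{-n(1+\sigma)}\{1 - n(1+\sigma)/2\}$ from Lemma \ref{Lem-10} into the definition \eqref{Def-M}. The combined coefficient of $H^2(a/a_0)^{-n(1+\sigma)}/c^2$ collapses to $-\sigma n^2/4$ after elementary algebra, and using $(a/a_0)^{-n(1+\sigma)} = \{1 + n(1+\sigma)Ht/2\}^{-2}$ (valid both when $\sigma\neq -1$, via \eqref{Def-a}, and trivially when $\sigma=-1$ since both sides equal $1$) yields the claimed identity. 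Then (2) follows by differentiating the expression from (1) in $t$ and dividing by $2$.

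For (3), each subcase is a direct inspection of the closed-form $M^2 = m^2 + \sigma(nH/2c)^2 \{1 + n(1+\sigma)Ht/2\}^{-2}$ on $[0,T_0)$. I would track the monotonicity and range of the factor $t \mapsto \{1+n(1+\sigma)Ht/2\}^{-2}$ according to the sign of $(1+\sigma)H$: it increases from $1$ to $+\infty$ when $(1+\sigma)H<0$ (so $T_0<\infty$), decreases from $1$ to $0$ when $(1+\sigma)H>0$ (so $T_0=\infty$), and is identically $1$ when $\sigma=-1$. Multiplying by $\sigma(nH/2c)^2$ according to the sign of $\sigma$ then produces the six subcases. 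Part (4) then follows by traversing the subcases of (3) and asking when the infimum of $M^2$ over $[0,T_0)$ is nonnegative: case (vi) is ruled out since $M^2\to-\infty$, and in the remaining cases the infimum is $m^2$, $m^2-(nH/2c)^2$, $m^2$ (positive automatically when $\sigma>0$), or $m^2+\sigma(nH/2c)^2$, giving the four listed conditions.

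For (5), assume case (vi) holds together with $m>\sqrt{|\sigma|}n|H|/(2c)$. Then $T_0=-2/(n(1+\sigma)H)<\infty$, and from (1) we have $M^2(0)=m^2+\sigma(nH/2c)^2>0$ while $M^2(t)\to-\infty$ as $t\to T_0$. Differentiating (1), and exploiting that $\sigma<0$ together with $(1+\sigma)H<0$ forces $H\sigma(1+\sigma)>0$, shows $\partial_t M^2<0$ throughout $[0,T_0)$, so $M^2$ is strictly monotone-decreasing. The intermediate value theorem then produces a unique $T_1\in(0,T_0)$ with $M^2(T_1)=0$, and solving this equation via the positive branch of the square root (forced since $1+n(1+\sigma)Ht/2>0$ on $[0,T_0)$) reproduces the formula \eqref{Def-T1}. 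The only real obstacle throughout is careful bookkeeping of signs across the six subcases of (3); no analytic difficulty beyond this case analysis is anticipated.
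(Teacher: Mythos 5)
Your proposal is correct and follows essentially the same route as the paper: (1) and (2) by direct substitution of Lemma \ref{Lem-10} into \eqref{Def-M} (the bracketed sum $\tfrac{n(n-2)}{4}+\tfrac{n}{2}\{1-\tfrac{n(1+\sigma)}{2}\}$ indeed collapses to $-\sigma n^2/4$, which after the overall minus sign gives the $+\sigma(nH/2c)^2$ term) and then differentiation, with (3)--(5) reduced to sign and monotonicity analysis of $t\mapsto\{1+n(1+\sigma)Ht/2\}^{-2}$. The only cosmetic difference is that in (5) you recover $T_1$ via the intermediate value theorem and then match it to \eqref{Def-T1}, whereas the paper starts from the defined $T_1$ and verifies $M^2(T_1)=0$; the content is identical.
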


\begin{proof}
The results (1) and (2) follow from the definition of $M$ by \eqref{Def-M}, and (1) and (2) in Lemma \ref{Lem-10}.

(3)
By (1), the results (i) and (ii) are trivial.
By (1), we also have 
\[
\sigma\left(\frac{nH}{2c}\right)^2
\left\{1+\frac{n(1+\sigma)Ht}{2}\right\}^{-2}
\begin{cases}
\searrow 0 & \mbox{if}\ 
H>0,\ \sigma>0,\\
\nearrow 0 & \mbox{if}\ (1+\sigma)H>0, \ \sigma<0, \\
\nearrow \infty & \mbox{if}\ 
H<0,\ \sigma>0,\\
\searrow -\infty & \mbox{if}\ (1+\sigma)H<0, \ \sigma<0, 
\end{cases}
\]
by which we obtain the results from (iii) to (vi).

(4) 
We obtain the required result by (3).

(5) 
We have $0<T_1<T_0<\infty$ by the definitions of $T_0$ and $T_1$ in the case (vi) in (3).
Since $M^2$ is a strictly monotone-decreasing function with $M_0^2>0$ and $M^2(T_1)=0$,
we obtain $M^2>0$ on the interval $[0,T_1)$.
\end{proof}

\begin{lemma}
\label{Lem-11'}
Under the assumption in Lemma \ref{Lem-11}, 
the following results hold.

(1) 
\ $\dot{a}\ge0$ if and only if $H\ge0$.

(2) 
Let $H\ge0$. 
Then $M\dot{M}\le 0$ holds if and only if one of the following conditions from (i) to (iii) holds.

\ \ (i) $H=0$, \ $\sigma\in \br$.

\ \ (ii) $H>0$, \ $\sigma\ge 0$.

\ \ (iii) $H>0$, \ $\sigma\le -1$.

(3) 
The following bounds for $\inf M^2$ hold.

\ \ (i) $\inf_{(0,T_0)} M^2=m^2>0$ if $H=0$,\ $\sigma\in \br$, $m\neq0$.

\ \ (ii) $\inf_{(0,T_0)} M^2=m^2>0$ if $H>0$,\ $\sigma=0$, $m\neq0$.

\ \ (iii) $\inf_{(0,T_0)} M^2=m^2>0$ if $H>0$,\ $\sigma>0$, $m\neq0$. 

\ \ (iv) 
$\inf_{(0,T)} M^2
=
\sigma
\left(
\frac{nH}{2c}
\right)^2\left(\frac{a(T)}{a_0}\right)^{-n(1+\sigma)}>0$ for $0<T<T_0$ 
if $H>0$,\ $\sigma>0$, $m=0$. 

\ \ (v) 
$\inf_{(0,T_0)} M^2=m^2-\left(\frac{nH}{2c}\right)^2>0$ if $H>0$,\ $\sigma=-1$, $m>\frac{nH}{2c}$.

\ \ (vi) 
$\inf_{(0,T_1)} M^2=m^2+\sigma\left(\frac{nH}{2c}\right)^2\left(\frac{a(T)}{a_0}\right)^{-n(1+\sigma)}>0$ for $0<T<T_1$
if $H>0$,\ $\sigma<-1$, $m> \frac{\sqrt{|\sigma|}nH}{2c}$.
\end{lemma}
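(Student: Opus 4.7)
The plan is to derive all three statements directly from the explicit formulas already obtained in Lemmas \ref{Lem-10} and \ref{Lem-11}, together with elementary monotonicity considerations; no new identity is needed.

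For (1), I would simply invoke (1) of Lemma \ref{Lem-10}, which gives $\dot a/a=H(a/a_0)^{-n(1+\sigma)/2}$. Since $a>0$ on $[0,T_0)$, the sign of $\dot a$ coincides with that of $H$, so $\dot a\ge0$ on $[0,T_0)$ if and only if $H\ge0$.

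For (2), I would use (2) of Lemma \ref{Lem-11}, namely
\[
M\dot M=-c\sigma(1+\sigma)\left(\frac{nH}{2c}\right)^{3}\left\{1+\frac{n(1+\sigma)Ht}{2}\right\}^{-3}.
\]
By the definition \eqref{R-Def-T_0} of $T_0$, the quantity $1+n(1+\sigma)Ht/2$ is strictly positive on $[0,T_0)$ in every case, so its cube has a fixed sign; in particular, with $H\ge0$, the bracketed denominator is positive throughout. Hence for $H=0$ one has $M\dot M\equiv0$, giving case (i); and for $H>0$ the sign of $M\dot M$ is exactly the opposite of the sign of $\sigma(1+\sigma)$. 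The condition $\sigma(1+\sigma)\ge0$ is equivalent to $\sigma\ge0$ or $\sigma\le-1$, which produces the cases (ii) and (iii). This also shows that no other value of $\sigma$ achieves $M\dot M\le0$ when $H>0$, giving the converse.

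For (3), the key step is the identity
\[
\left\{1+\frac{n(1+\sigma)Ht}{2}\right\}^{-2}=\left(\frac{a(t)}{a_0}\right)^{-n(1+\sigma)}\qquad(\sigma\ne-1),
\]
which follows immediately from \eqref{Def-a}. Substituting this into (1) of Lemma \ref{Lem-11} rewrites $M^{2}$ as $m^{2}+\sigma(nH/2c)^{2}(a/a_0)^{-n(1+\sigma)}$. From here each subcase is a short monotonicity check: in (i) and (ii) the perturbing term vanishes, so $M^{2}\equiv m^{2}$; in (iii) $H>0$ and $\sigma>0$ make the perturbing term positive and monotonically decreasing to $0$ as $t\uparrow T_0=\infty$, so $\inf M^{2}=m^{2}$; in (iv) the same monotonicity identifies the infimum on $(0,T)$ with the value at $t=T$, giving the stated formula; in (v) the factor $(a/a_0)^{-n(1+\sigma)}$ collapses to $1$ since $1+\sigma=0$, which yields the constant $m^{2}-(nH/2c)^{2}$, strictly positive by hypothesis; and in (vi), with $H>0$ and $\sigma<-1$, one has $-n(1+\sigma)>0$ together with $\dot a>0$, so $(a/a_0)^{-n(1+\sigma)}$ is monotonically increasing, which—combined with $\sigma<0$—makes $M^{2}$ monotonically decreasing, so on any $(0,T)\subset(0,T_1)$ the infimum is again attained at $t=T$ and equals the displayed expression; positivity on $[0,T_1)$ is exactly the content of (5) of Lemma \ref{Lem-11}.

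The calculations are almost entirely algebraic, so there is no real obstacle; the only point requiring a little care is keeping the sign bookkeeping straight in (2) when $\sigma+1$ and $\sigma$ can take opposite signs, and ensuring in (3)(vi) that the direction of monotonicity of $M^{2}$ is tracked correctly through the two negative factors $\sigma$ and $1+\sigma$.
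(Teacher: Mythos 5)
Your proposal is correct and takes essentially the same route as the paper, which simply derives (1) from Lemma \ref{Lem-10}(1), (2) from the sign of the explicit formula for $M\dot M$ in Lemma \ref{Lem-11}(2), and (3) from the monotonicity facts in Lemma \ref{Lem-11}(3); you merely spell out the sign bookkeeping and the identity $(a/a_0)^{-n(1+\sigma)}=\{1+n(1+\sigma)Ht/2\}^{-2}$ that the paper leaves implicit.
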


\begin{proof}
The results (1) and (2) follow from (1) in Lemma \ref{Lem-10} and (2) in Lemma \ref{Lem-11}, respectively.
The result (3) follows from (3) in Lemma \ref{Lem-11}.
\end{proof}

%%%%%%%%%%%%%%%%%%%%%%%%%%%%%
%  Proof of Theorem \ref{Thm-13-Cor-14}
%%%%%%%%%%%%%%%%%%%%%%%%%%%%%

\newsection{Proof of Theorem \ref{Thm-13-Cor-14}}
\label{Sec-Thm-13-Cor-14}
We prove Theorem \ref{Thm-13-Cor-14} in this section.

(1) 
For $T>0$, $R_0>0$, $R_{\mu_0}>0$ and $R_\mu>0$, 
we define the metric space defined by 
\[
X(T,R_0,R_{\mu_0},R_\mu)
:=\{u\in X(T);\ \|u\|_{\dot{X}^\nu(T)}\le R_\nu\ \mbox{for}\ \nu=0,\mu_0,\mu\}
\]
with the metric $d(u,v):=\|u-v\|_{\dot{X}^0(T)}$,
where $X(T)$ and $\|\cdot\|_{\dot{X}^\nu(T)}$ are defined by \eqref{X-Set} and \eqref{Def-XT}.
We regard the solution of the Cauchy problem \eqref{Cauchy} as the fixed point of the operator defined by
\beq
\label{Def-Phi}
\Phi(u)(t)=\Phi(u;u_0,u_1)(t):=K_0(t)u_0+K_1(t) u_1-c^2\int_0^t K_2(t,s) h(u)(s) ds, 
\eeq
where $h(u)(s)=h(u)(s,\cdot):=a^{n/2}(s)f(a^{-n/2}(s)u(s,\cdot))$, 
and $K_0$, $K_1$ and $K_2$ are given by \eqref{Def-K}.
In the following, 
we show that $\Phi$ is a contraction-map on $X(T,R_0,R_{\mu_0},R_\mu)$ for some $T$, $R_0$, $R_{\mu_0}$ and $R_\mu$, 
and we obtain the solution as its fixed point.

Under the assumption of $n$, $\mu_0$, $\mu$, $p$, $q$, $q_\ast$ in the theorem, 
put 
\[ 
\theta:=\frac{(p-1)(n-2\mu_0)}{2p},
\ \ 
\frac{1}{r_\ast}
:=
\frac{1}{2}-\frac{\theta}{n},
\ \ 
\frac{1}{r_\sharp}
:=
\frac{1}{2}-\frac{\mu_0+\theta}{n}.
\]
We have ${1}/{2}=(p-1)/{r_\sharp}+{1}/{r_\ast}$ by the definitions of $r_\ast$ and $r_\sharp$. 
We have $0\le \theta\le 1$ by $\mu_0<n/2$ and the assumption \eqref{Condition-p}. 
We have $0<1/r_\ast\le 1/2$ and $0<1/r_\sharp\le 1/2$ by $\mu_0\ge0$ and $\theta\ge0$.
Thus, we have 
\begin{eqnarray*}
\|f(u)\|_{\dot{H}^{\nu}(\brn)}
&\lesssim& 
\|u\|_{\dot{B}^0_{r_\sharp,2}(\brn)\cap L^{r_\sharp}(\brn) }^{p-1}
\|u\|_{\dot{B}^\nu_{r_\ast,2}(\brn)}
\\
&\lesssim&
\|u\|_{\dot{H}^{\mu_0+\theta}(\brn) }^{p-1}
\|u\|_{\dot{H}^{\nu+\theta}(\brn)}
\\
&\lesssim&
\|u\|_{\dot{H}^{\mu_0}(\brn) }^{(1-\theta)(p-1)}
\|\nabla u\|_{\dot{H}^{\mu_0}(\brn)}^{\theta(p-1)}
\|u\|_{\dot{H}^{\nu}(\brn) }^{1-\theta}
\|\nabla u\|_{\dot{H}^{\nu}(\brn)}^{\theta}
\end{eqnarray*}
for $0\le \nu\le \mu$ by \cite[Lemma 2.2]{Nakamura-Ozawa-1997-RMP}, the Sobolev embedding, and the interpolation.

Since $q_\ast$ defined in \eqref{Def-delta-qAst} satisfies 
\[
0\le \frac{1}{q_\ast} \le 1,
\ \ 
1=\frac{1}{q_\ast}+\frac{\theta p}{q}
\] 
by \eqref{Def-q2},
we have  
\begin{eqnarray}
&&
\|h(u)\|_{L^1((0,T), \dot{H}^{\nu}(\brn)) }
\nonumber\\
&\lesssim&
A_\ast(T)
\|Mu\|_{L^\infty((0,T), \dot{H}^{\mu_0}(\brn)) }^{(1-\theta)(p-1)}
\left\|
\left(\frac{\dot{a}}{a}\right)^{1/q} 
\frac{\nabla u}{a}
\right\|_{L^q((0,T), \dot{H}^{\mu_0}(\brn))}^{\theta(p-1)}
\nonumber\\
&& \cdot \|Mu\|_{L^\infty((0,T), \dot{H}^{\nu}(\brn) )}^{1-\theta}
\left\|
\left(\frac{\dot{a}}{a}\right)^{1/q} 
\frac{\nabla u}{a}
\right\|_{L^q((0,T),\dot{H}^{\nu}(\brn))}^{\theta}
\nonumber 
\\
&\le& A_\ast(T)R_{\mu_0}^{p-1} R_\nu
\label{Proof-Thm-13-1000}
\end{eqnarray}
for $u\in X(T,R_0, R_{\mu_0},R_\mu)$ 
by the H\"older inequality, 
where 
\beq
\label{AAst-A}
A_\ast(T)
:=
\left\|
a^{-n(p-1)/2+\theta p}
\left(
\frac{\dot{a}}{a}
\right)^{-\theta p/q} 
M^{-(1-\theta)p}
\right\|_{L^{q_\ast}((0,T))}\le A(T),
\eeq
where $A(T)$ is defined by \eqref{Def-A}, 
and we have used $(1-\theta)p=\delta$ and $\theta p-n(p-1)/2=-\mu_0(p-1)$ for $\delta$ defined by \eqref{Def-delta-qAst}.

By \eqref{Proof-Thm-13-1000}, \eqref{AAst-A} and the energy estimate (2) in Lemma \ref{Lem-2} putting $h=a^{n/2}f(a^{-n/2} u)$, 
we have 
\[
\|\Phi(u)\|_{\dot{X}^\nu}
\lesssim D_\nu+c\|h(u)\|_{L^1((0,T),\dot{H}^\nu(\brn))}
\lesssim D_\nu+cA(T)R_{\mu_0}^{p-1} R_\nu
\]
for $\nu=0,\mu_0,\mu$, 
where we have put 
\[
D_\nu:= 
c^{-1}\|u_1\|_{\dot{H}^\nu(\brn)}+a_0^{-1}\|\nabla u_0\|_{\dot{H}^\nu(\brn)}+M_0\|u_0\|_{\dot{H}^\nu(\brn)}.
\]
So that, there are some positive constants $C_0$ and $C$ independent of $u_0$, $u$ and $T$ such that 
\beq
\label{Proof-Thm-13-2000}
\|\Phi(u)\|_{\dot{X}^\nu}
\le C_0D_\nu+CcA(T)R_{\mu_0}^{p-1} R_\nu
\le R_\nu
\eeq
for $\nu=0,\mu_0,\mu$, and any $u\in X(T,R_0,R_{\mu_0},R_\mu)$
under the conditions 
\beq
\label{Proof-Thm-13-3000}
2C_0D_\nu\le R_\nu,
\ \ 
2CcA(T)R_{\mu_0}^{p-1}\le 1.
\eeq

Similarly, we estimate the metric as follows.
Starting from the simpler estimate  
\begin{multline*}
\|f(u)-f(v)\|_{L^2(\brn)}
\lesssim
\max_{w=u,v}
\|w\|_{\dot{H}^{\mu_0}(\brn) }^{(1-\theta)(p-1)}
\|\nabla w\|_{\dot{H}^{\mu_0}(\brn)}^{\theta(p-1)}
\\
\cdot 
\|u-v\|_{L^2(\brn) }^{1-\theta}
\|\nabla (u-v)\|_{L^2(\brn)}^{\theta},
\end{multline*}
we have  
\begin{eqnarray*}
&&
\|h(u)-h(v)\|_{L^1((0,T), L^2(\brn)) }
\nonumber\\
&\lesssim&
A(T)
\max_{w=u,v} 
\|Mw\|_{L^\infty((0,T), \dot{H}^{\mu_0}(\brn)) }^{(1-\theta)(p-1)}
\left\|
\left(\frac{\dot{a}}{a}\right)^{1/q} 
\frac{\nabla w}{a}
\right\|_{L^q((0,T), \dot{H}^{\mu_0}(\brn))}^{\theta(p-1)}
\nonumber\\
&& \cdot \|M(u-v)\|_{L^\infty((0,T), L^2(\brn) )}^{1-\theta}
\left\|
\left(\frac{\dot{a}}{a}\right)^{1/q} 
\frac{\nabla (u-v)}{a}
\right\|_{L^q((0,T),L^2(\brn))}^{\theta}
\nonumber 
\\
&\le& A(T)R_{\mu_0}^{p-1} d(u,v)
\end{eqnarray*}
and 
\beq
\label{Proof-Thm-13-4000}
\begin{array}{lll}
d(\Phi(u;u_0,u_1),\Phi(v;v_0,v_1))
&\lesssim& \widetilde{D}+c\|h(u)-h(v)\|_{L^1((0,T),L^2(\brn))} \\
&\lesssim& \widetilde{D}+cA(T)R_{\mu_0}^{p-1} d(u,v)
\end{array}
\eeq
for any $u, v\in X(T,R_0,R_{\mu_0},R_\mu)$, 
where we have put 
\[
\widetilde{D}:= 
c^{-1}\|u_1-v_1\|_{L^2(\brn)}+a_0^{-1}\|\nabla (u_0-v_0)\|_{L^2(\brn)}+M_0\|u_0-v_0\|_{L^2(\brn)}.
\]
So that, we obtain 
\beq
\label{Proof-Thm-13-5000}
d(\Phi(u;u_0,u_1),\Phi(v;v_0,v_1))
\le C_0\widetilde{D}+CcA(T)R_{\mu_0}^{p-1} d(u,v)
\le C_0\widetilde{D}+\frac{1}{2} d(u,v)
\eeq
for any $u,v\in X(T,R_0,R_{\mu_0},R_\mu)$
under the conditions \eqref{Proof-Thm-13-3000},
where $C_0$ and $C$ are newly taken if necessary.

When $v$ has the same initial data of $u$, i.e., $u_0=v_0$ and $u_1=v_1$, we have $\widetilde{D}=0$.
By \eqref{Proof-Thm-13-2000} and \eqref{Proof-Thm-13-5000}, the operator $\Phi$ is a contraction-mapping on $X(T,R_0,R_{\mu_0},R_\mu)$ under the conditions \eqref{Proof-Thm-13-3000}, 
and $\Phi$ has a unique fixed point $u$ in it.
We can show $u\in C([0,T),H^{\mu+1}(\brn))$ $\cap C^1([0,T),H^\mu(\brn))$, 
the uniqueness of $u$ in $X(T)$, and the result (2) by standard arguments 
(see, e.x., \cite{Nakamura-2021-JDE}).

(3) 
The conditions \eqref{Proof-Thm-13-3000} are satisfied by 
\beq
\label{Proof-Thm-13-6000}
2Cc(2C_0D_{\mu_0})^{p-1} A(T)\le 1
\eeq
by taking $R_{\mu_0}=2C_0 D_{\mu_0}$ 
for any $T$ with $0<T\le T_1\le T_0$.
So that, if $T=T_1$, then the solution exists on the interval $[0,T_1)$.
Moreover, if $T=T_0$, then the solution $u$ obtained by (1) exists globally 
since $T_0$ is the end of the spacetime, 
and it satisfies 
\[
u(t)=
u_+(t)
+
c^2\int_t^{T_0} K_2(t,s) h(u)(s) ds,
\]
where 
\begin{eqnarray*}
&&
v_0:=
u_0+c^2\int_0^{T_0} K_1(s) h(u)(s) ds,
\ \ 
v_1:=
u_1-c^2\int_0^{T_0} K_0(s) h(u)(s) ds,
\\
&& 
u_+(t):=K_0(t)v_0+K_1(t)v_1
\end{eqnarray*}
for $0\le t<T_0$ 
by $\Phi(u)=u$ in \eqref{Def-Phi}.
Let $\eta$ and $N_1,N_2,N_3,N_4$ be defined by \eqref{Def-Eta} and \eqref{Def-N}.
We note that $N_1,N_2,N_3,N_4$ are finite numbers by the assumption 
$\inf_{0<t<T_0} M(t)>0$.
The assumption $\alpha\ge0$ and $\partial_t\alpha\le 0$ in (2) in Lemma \ref{Lem-4} for $\alpha$ defined by \eqref{Def-Alpha} is satisfied as $\alpha>0$ by its definition and (2) in Lemma \ref{Lem-5}. 
We have 
\[
\|v_0\|_{H^{\nu}(\brn)}
\le
\|u_0\|_{H^\nu(\brn)}
+
cN_4\|h(u)\|_{L^1((0,T_0),H^\nu(\brn))},
\]
\[
\|v_1\|_{H^{\nu-1}(\brn)}
\le
\|u_1\|_{H^{\nu-1}(\brn)}
+
c^2N_1\|h(u)\|_{L^1((0,T_0),H^\nu(\brn))},
\]
\[
\|u_+\|_{H^{\nu-1}(\brn)}
\le
N_1\|v_0\|_{H^{\nu}(\brn)}
+
\frac{N_4}{c}\|v_1\|_{H^{\nu-1}(\brn)}
\]
hold for $0\le \nu\le \mu$ by (1) and (3) in Lemma \ref{Lem-7},
and each right hand side is finite by \eqref{Proof-Thm-13-1000}. 
Thus, putting $\nu=\mu-1$ or $\nu=\mu$, we have 
\[
\|u(t)-u_+(t)\|_{H^{\mu-1}(\brn)}
\lesssim
cN_1N_4 \int_t^{T_0} \|h(u)(s)\|_{H^\mu(\brn)} ds \to 0,
\]
and 
\[
\eta(t)^{-1}\|u(t)-u_+(t)\|_{H^{\mu}(\brn)}
\lesssim
c\max\{N_1N_3, N_4\}  \int_t^{T_0} \|h(u)(s)\|_{H^\mu(\brn)} ds \to 0
\]
as $t$ tends to $T_0$ by (8) in Lemma \ref{Lem-7},
where we have used the monotonicity of $\eta$ by (7) in Lemma \ref{Lem-5}.
Similarly, since we have
\[
\partial_t(u-u_+)(t)=c^2\int_t^{T_0} \partial_tK_2(t,s) h(u)(s) ds
\]
by $K_2(t,t)=0$, 
we have 
\[
\|\partial_t(u-v)(t)\|_{H^{\mu-1}(\brn)}
\lesssim
c^2\max\{N_1,N_2N_4\} \int_t^{T_0} \|h(u)(s)\|_{H^\mu(\brn)} ds \to 0
\]
and 
\[
\eta(t)^{-1}\|\partial_t(u-v)(t)\|_{H^{\mu}(\brn)}
\lesssim
c^2\max\{1, N_2N_3\}  \int_t^{T_0} \|h(u)(s)\|_{H^\mu(\brn)} ds \to 0
\]
as $t$ tends to $T_0$ by (9) in Lemma \ref{Lem-7}.

(4) 
We prove the existence of global solutions for large data.
By the energy estimate \eqref{Lem-2-5} with \eqref{f-V}, 
we have 
\[
c^{-1}\|\partial_t u(t)\|_{L^2(\brn)}+a^{-1}\|\nabla u(t)\|_{L^2(\brn)}+M(t)\|u(t)\|_{L^2(\brn)}\le 3\sqrt{\widetilde{E}(u)(0)}
\]
by $\dot{a}\ge0$, $\dot{M}\le 0$ and $\lambda\ge0$, 
where $\widetilde{E}(u)$ is defined by \eqref{Lem-2-5}.
Taking $q=\infty$, 
we have $q_\ast=1$ by \eqref{Def-delta-qAst}.
Let $u$ be the solution of \eqref{Cauchy} by (1).
Now, assume that there exists $T_\ast$ such that $0<T_\ast<T_0$ with 
$\limsup_{t\nearrow T_\ast} \|u(t)\|_{H^{\mu+1}(\brn)}=\infty$, 
then we can take positive real numbers $t_0$, $\varepsilon$ and $T$ such that 
$0<t_0<T_\ast<t_0+\varepsilon<T<T_0$ and 
\[
\widetilde{A}(T):=M(T)^{-\delta}  
\left\|
{a}^{-\mu_0(p-1)}
\left(
\frac{\dot{a}}{a}
\right)^{1/q_\ast-1}
\right\|_{L^{q_\ast}((t_0,t_0+\varepsilon))}
\]
and 
\[
2Cc \widetilde{A}(T)\left(6C_0\sqrt{\widetilde{E}(u)(0)}\right)^{p-1}\le 1,
\]
similarly to \eqref{Proof-Thm-13-3000} and \eqref{Def-A}.
So that, we can obtain the prolonged solution $u\in C([t_0,t_0+\varepsilon), H^{\mu+1}(\brn))$ starting from $t_0$ by the same argument in the proof of (1),
which contradicts to the definition of $T_\ast$.
So that, we have 
\[
\limsup_{0<t<T_0} \|u(t)\|_{H^{\mu+1}(\brn)}<\infty,
\]
which shows that 
$u$ is a global solution since $u$ exists on the small interval starting from any $t\in [0,T_0)$ by the same argument in the proof of (1).

\newsection{Proof of Corollary \ref{Thm-16-Cor-17}}
\label{Sec-Thm-16-Cor-17}
To prove Corollary \ref{Thm-16-Cor-17}, 
we prepare the following lemma which estimates the existence-time of the solution 
under \eqref{Def-a} and \eqref{M-FLRW}. 

\begin{lemma}
\label{Lem-15}
Let $n\ge1$, and let $a$ be given by \eqref{Def-a}.
Let $H$ and $\sigma$ satisfy one of the conditions from (i) to (iii) in (2) in Lemma \ref{Lem-11'}.
Put 
\beq
\label{Def-B}
B(T):=
\left\|
\left(
\frac{a}{a_0}
\right)^{-\mu_0(p-1)}
\left(
\frac{\dot{a}}{a}
\right)^{1/q_\ast-1}
\right\|_{L^{q_\ast}((0,T))}
\eeq
for $0<T\le T_0$.
Let $q$ and $q_\ast$ be those in Theorem \ref{Thm-13-Cor-14}.
Let $\gamma$, $p_1(\mu_0)$ and $p_2(\mu_0)$ be given by \eqref{Def-G-Gamma} and \eqref{Def-p1-p2}.
Let $m>\sqrt{|\sigma|}nH/2c$ 
if $H>0$ and $\sigma<-1$.
Then the following results hold, 
where $B_1$, $B_2$ and $B_3$ are defined by 
\eqref{Def-J0-J1} and \eqref{Def-J2-J3}.

(1) 
$B(T)=T$ 
if $H=0$, $\sigma\in \br$, $q=\infty$, $1\le p<\infty$.

(2) 
$\gamma>1$ and 
\[
B(T)=
B_1
\left[
1-
\left\{
1+\frac{n(1+\sigma)HT}{2}
\right\}^{1-\gamma}
\right]^{1/q_\ast}\le B_1
\]
if $H>0$, $\sigma\ge0$, $\mu_0>0$, $p_1(\mu_0)<p<\infty$, $q_\ast<\infty$.

(3) 
$\gamma<1$ and 
\[
B(T)=
B_1
\left|
\left\{
1+\frac{n(1+\sigma)HT}{2}
\right\}^{1-\gamma}
-1
\right|^{1/q_\ast}
\]
if $q_\ast<\infty$ and one of the following conditions from (i) to (iv) holds. 
Moreover, if $\sigma<-1$ under one of (i), (ii), (iii) and (iv), 
then $B(T)\le B_1$ holds.

(i) $H>0$, $\sigma<-1$, $\mu_0=0$, $1\le p<\infty$.

(ii) $H>0$, $\sigma\ge 0$, $\mu_0=0$, $1\le p<\infty$.

(iii) $H>0$, $\sigma<-1$, $\mu_0>0$, $1\le p<\infty$.

(iv) $H>0$, $\sigma\ge 0$, $\mu_0>0$, $1\le p<p_1(\mu_0)$.

(4) 
$\gamma=1$ and 
\[
B(T)=
B_2
\left[
\log 
\left\{
1+\frac{n(1+\sigma)HT}{2}
\right\}
\right]^{1/q_\ast}
\]
if $H>0$, $\sigma\ge0$, $\mu_0>0$, $p=p_1(\mu_0)$, $q_\ast<\infty$.

(5) 
\[
B(T)=
B_3
\left\{
1-e^{-\mu_0(p-1)HTq_\ast}
\right\}^{1/q_\ast}
\]
if $H>0$, $\sigma=-1$, $\mu_0>0$, $1<p<\infty$, $q_\ast<\infty$.

(6) 
$B(T)=(2HT)^{1/q_\ast}/2H$ 
if one of the following conditions holds.
(i) $H>0$, $\sigma=-1$, $\mu_0=0$, $1\le p<\infty$, $q_\ast<\infty$.
(ii) $H>0$, $\sigma=-1$, $\mu_0>0$, $p=1$, $q_\ast<\infty$.

(7) 
$B(T)=(2H)^{-1}(a(T)/a_0)^{-\mu_0(p-1)+n(1+\sigma)/2}$ 
if one of the following conditions holds.
(i) $H>0$, $\sigma\ge0$, $\mu_0=0$, $p_2(\mu_0)\le p<\infty$, $q_\ast=\infty$.
(ii) $H>0$, $\sigma\ge0$, $\mu_0>0$, $p_2(\mu_0)\le p<p_1(\mu_0)$, $q_\ast=\infty$.

(8) 
$B(T)=(2H)^{-1}$ 
if one of the following conditions holds.
(i) $H>0$, $\sigma\ge0$, $\mu_0>0$, $\max\{p_1(\mu_0),p_2(\mu_0)\}\le p<\infty$, $q_\ast=\infty$.
(ii) $H>0$, $\sigma\le-1$, $\mu_0\ge0$, $p_2(\mu_0)\le p<\infty$, $q_\ast=\infty$.
\end{lemma}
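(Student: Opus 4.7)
The plan is to compute $B(T)$ in each case by direct evaluation, splitting naturally into $\sigma\neq -1$ (power-type $a$), $\sigma=-1$ (exponential $a$), and $q_\ast=\infty$ (essential supremum). All of the integrals turn out to be elementary once the integrand is rewritten as a single power (or exponential) of a convenient variable; what matters is organizing the bookkeeping so that each hypothesis lands in the correct branch.

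First, I would treat the case $\sigma\neq -1$. Using (1) of Lemma \ref{Lem-10}, $\dot a/a=H(a/a_0)^{-n(1+\sigma)/2}$, so the integrand satisfies
\[
\left(\frac{a}{a_0}\right)^{-\mu_0(p-1)q_\ast}\!\!\left(\frac{\dot a}{a}\right)^{1-q_\ast}
=H^{1-q_\ast}\left(\frac{a}{a_0}\right)^{-\mu_0(p-1)q_\ast+\frac{n(1+\sigma)}{2}(q_\ast-1)}.
\]
Introducing $s:=1+n(1+\sigma)Ht/2$, so that $a/a_0=s^{2/n(1+\sigma)}$ and $ds=(n(1+\sigma)H/2)\,dt$, the exponent collapses to $-\gamma$ in the variable $s$ (one checks, using the identity $(p-1)(n-2\mu_0)q_\ast/(2q)=q_\ast-1$ coming from the definition of $q_\ast$, that $(q_\ast-1)-2\mu_0(p-1)q_\ast/n(1+\sigma)=-\gamma$). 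Therefore
\[
B(T)^{q_\ast}=H^{1-q_\ast}\int_0^T s^{-\gamma}\,dt
=\frac{2H^{-q_\ast}}{n(1+\sigma)}\int_1^{s(T)}u^{-\gamma}\,du.
\]
This integral is $[s(T)^{1-\gamma}-1]/(1-\gamma)$ for $\gamma\neq 1$ and $\log s(T)$ for $\gamma=1$, which after separating the sign of $\gamma-1$ and combining with $H$ produces the constants $B_1$ (for cases (2) and (3)) and $B_2$ (for case (4)). The conditions on $p$ relative to $p_1(\mu_0)$ precisely control the sign of $\gamma-1$: $p>p_1(\mu_0)$ gives $\gamma>1$ (bounded $B(T)\leq B_1$), $p<p_1(\mu_0)$ gives $\gamma<1$, and $p=p_1(\mu_0)$ gives $\gamma=1$ (the logarithm); when $\sigma<-1$ the additional boundedness $B(T)\leq B_1$ follows since $s(T)^{1-\gamma}\to 0$ as $T\to T_0^-$.

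Next, for $\sigma=-1$, Lemma \ref{Lem-10} (or direct computation) gives $a/a_0=e^{Ht}$ and $\dot a/a=H$, so the integrand reduces to $H^{1-q_\ast}e^{-\mu_0(p-1)Hq_\ast t}$. If $\mu_0(p-1)\neq 0$ this integrates to an exponential formula which, after collecting constants, produces $B_3\{1-e^{-\mu_0(p-1)Hq_\ast T}\}^{1/q_\ast}$ as in (5); if $\mu_0(p-1)=0$ (i.e.\ $\mu_0=0$ or $p=1$) the integrand is constant and we get $(2HT)^{1/q_\ast}/(2H)$ as in (6). The case $H=0$ in (1) is immediate since then $q=\infty$ forces $q_\ast=1$, both $(\dot a/a)^{1-q_\ast}$ and $(a/a_0)^{-\mu_0(p-1)}$ equal $1$, and $B(T)=T$.

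Finally, for $q_\ast=\infty$ (cases (7) and (8)), the norm is an essential supremum and $(\dot a/a)^{1/q_\ast-1}=(\dot a/a)^{-1}=H^{-1}(a/a_0)^{n(1+\sigma)/2}$, so
\[
B(T)=H^{-1}\,\mathop{\mathrm{ess\,sup}}_{0<t<T}\,\left(\frac{a}{a_0}\right)^{-\mu_0(p-1)+\frac{n(1+\sigma)}{2}}.
\]
The sign of the combined exponent $-\mu_0(p-1)+n(1+\sigma)/2$ (which vanishes when $p=p_1(\mu_0)$, is positive when $p<p_1(\mu_0)$, and negative when $p>p_1(\mu_0)$, with the $\sigma<-1$ case handled analogously) determines whether the essential supremum is attained at $t=T$ (giving $(a(T)/a_0)^{\cdots}/(2H)$ as in (7)) or at $t=0$ (giving the constant $(2H)^{-1}$ as in (8)). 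The main obstacle is not any hard estimation, but the careful case analysis: for each of the thirteen sub-conditions one must verify which of the three archetypes (power-in-$s$, exponential, $L^\infty$) applies and check the sign of $\gamma-1$ (or of the exponent $-\mu_0(p-1)+n(1+\sigma)/2$) to identify the correct closed form. Once this bookkeeping is arranged into the three regimes above, each formula in (1)–(8) follows by substitution.
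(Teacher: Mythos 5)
Your proposal is correct and follows essentially the same route as the paper: rewrite the integrand via Lemma \ref{Lem-10}(1) as a single power of $1+n(1+\sigma)Ht/2$ (or an exponential when $\sigma=-1$, or an essential supremum when $q_\ast=\infty$), evaluate the resulting elementary integral using the identity $(p-1)(n-2\mu_0)q_\ast/2q=q_\ast-1$, and sort the cases by the sign of $\gamma-1$ or of $-\mu_0(p-1)+n(1+\sigma)/2$. One small remark: your (correct) prefactor $H^{1/q_\ast-1}$ differs from the $(2H)^{1/q_\ast-1}$ appearing in the paper's intermediate identity \eqref{Proof-Lem-15-1000} by the harmless factor $2^{1/q_\ast-1}$, so your closed forms reproduce $B_1$, $B_2$, $B_3$ only up to that constant; this traces to the paper's normalization of those constants rather than to any gap in your computation.
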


\begin{proof}
We have the result (1), i.e., $B(T)=T$, since $a$ is a constant if $H=0$, and $q_\ast=1$ by $q=\infty$. 
To prove the results (2), (3) and (4), let $H>0$, $\sigma\in \br$ and $1\le q_\ast\le \infty$.
We have 
\beq
\label{Proof-Lem-15-1000}
B(T)=
(2H)^{1/q_\ast-1}
\left\|
\left(
\frac{a}{a_0}
\right)^{-\mu_0(p-1)-n(1+\sigma)(1/q_\ast-1)/2}
\right\|_{L^{q_\ast}((0,T))} 
\eeq
by Lemma \ref{Lem-10}.
If $\sigma\neq-1$ and $q_\ast<\infty$, 
then we have 
\begin{eqnarray*}
B(T)
&=&
(2H)^{1/q_\ast-1}
\left[
\int_0^T
\left\{
1+\frac{n(1+\sigma)Ht}{2}
\right\}^{-\gamma} dt 
\right]^{1/q_\ast}
\\
&=& 
\frac{1}{2H}
\left\{
\frac{4}{n(1+\sigma)}
\right\}^{1/q_\ast}
\cdot
\begin{cases}
\left(
\frac{1}{1-\gamma}
\left[
\left\{
1+\frac{n(1+\sigma)HT}{2}
\right\}^{1-\gamma}
-1
\right]
\right)^{1/q_\ast}
& 
\mbox{if}\ \gamma\neq1, 
\\
\left[
\log 
\left\{
1+\frac{n(1+\sigma)HT}{2}
\right\}
\right]^{1/q_\ast}
&
\mbox{if}\ \gamma=1, 
\end{cases}
\\
&\le&
B_1 \ \ 
\mbox{if}\ 
\begin{cases}
H>0,\ \sigma>-1, \gamma>1,
\\
\mbox{or} \ \ 
H>0,\ \sigma<-1, \gamma<1.
\end{cases}
\end{eqnarray*}
Under $q_\ast<\infty$, we note the following equivalence.
\begin{eqnarray*}
\gamma>1 &\Leftrightarrow& \mu_0>0, \ \sigma>-1, \ p_1(\mu_0)<p<\infty. 
\\
\gamma<1 &\Leftrightarrow& 
\begin{cases}
\mu_0=0, \ \sigma\neq-1, \ 1\le p<\infty,
\\
\mbox{or}
\ \ 
\mu_0>0, \ \sigma<-1, \ 1\le p<\infty,
\\
\mbox{or}
\ \ 
\mu_0>0, \ \sigma>-1, \ 1\le p<p_1(\mu_0).
\end{cases}
\\
\gamma=1 &\Leftrightarrow& \mu_0>0, \ \sigma>-1, \ p=p_1(\mu_0),
\end{eqnarray*}
where we note that $\sigma>-1$ can be replaced with $\sigma\ge0$ by $H>0$ and the assumption (i), (ii) or (iii) in (2) in Lemma \ref{Lem-11'}.
The results (2), (3) and (4) corresponds to the cases $\gamma>1$, $\gamma<1$ and $\gamma=1$, respectively.

To prove the results (5) and (6), let $H>0$, $\sigma=-1$ and $q_\ast<\infty$.
We have 
\begin{eqnarray*}
B(T)
&=& 
\begin{cases}
B_3
\left\{
1-e^{-\mu_0(p-1)HTq_\ast}
\right\}^{1/q_\ast}
& 
\mbox{if}\ \mu_0(p-1)>0, 
\\
\frac{(2HT)^{1/q_\ast}}{2H}
& 
\mbox{if}\ \mu_0(p-1)=0
\end{cases}
\end{eqnarray*}
by \eqref{Proof-Lem-15-1000}.
The results (5) and (6) correspond to the cases $\mu_0(p-1)>0$ and $\mu_0(p-1)=0$, respectively.

To prove the results (7) and (8), let $H>0$, $\sigma\in \br$ and $q_\ast=\infty$, 
and put $\omega:=-\mu_0(p-1)+n(1+\sigma)/2$.
We have $q=(p-1)(n-2\mu_0)/2$ by the definition \eqref{Def-delta-qAst} of $q_\ast$.
We need $p_2(\mu_0)\le p<\infty$ to have $2\le q<\infty$.
We have 
\begin{eqnarray*}
B(T)
&=& 
\begin{cases}
(2H)^{-1}
& 
\mbox{if}\ 
\begin{cases}
\sigma\le -1, \ \omega\in \br,
\\
\mbox{or}
\ \ 
\sigma\ge 0,\ \omega\le 0, 
\end{cases}
\\
(2H)^{-1}
\left\{
\frac{a(T)}{a_0}
\right\}^{\omega}
& 
\mbox{if}\ \sigma\ge 0,\ \omega> 0 
\end{cases}
\end{eqnarray*}
by \eqref{Proof-Lem-15-1000},
where we note that $a$ is a non-decreasing function.
Under $\sigma\ge0$, 
we note the following equivalence.
\begin{eqnarray*}
\omega\le0 &\Leftrightarrow& \mu_0>0, \ p_1(\mu_0)\le p<\infty. 
\\
\omega>0 &\Leftrightarrow& 
\begin{cases}
\mu_0=0, \ 1\le p<\infty,
\\
\mbox{or}
\ \ 
\mu_0>0, \ 1\le p<p_1(\mu_0).
\end{cases}
\end{eqnarray*}
The result (7) corresponds to the case $\sigma\ge0$ with $\omega>0$,
and the result (8) corresponds to the cases $\sigma\le -1$ with $\omega\in \br$, or $\sigma\ge0$ with $\omega\le0$.
\end{proof}

We prove Corollary \ref{Thm-16-Cor-17}.
We recall the condition \eqref{Proof-Thm-13-3000} under which we have obtained the solution of the Cauchy problem \eqref{Cauchy}.
This condition is satisfied by \eqref{Proof-Thm-13-6000} 
by taking $R_{\mu_0}=2C_0D_{\mu_0}$,
which is rewritten as 
\beq
\label{Proof-Thm-16-2000}
B(T)\le GM(T)^\delta,
\eeq
by the definitions of $M$, $\delta$, $G$ and $B$ 
by \eqref{Def-M}, \eqref{Def-delta-qAst}, \eqref{Def-G-Gamma} and \eqref{Def-B}, respectively.

(1) 
We prove the existence of local solutions.
It suffices to show that \eqref{Proof-Thm-16-2000} holds under each assumption from (i) to (xiii) in the statement.

(i) 
Under the assumption $H=0$, we have $B=T$, and $M=m$ by (1) in Lemma \ref{Lem-15}, and  (3) in Lemma \ref{Lem-11}.
Thus, the condition \eqref{Proof-Thm-16-2000} holds if $T$ satisfies  
$T\le G m^\delta$ as required.

(ii)
Under the assumption, 
we have $\gamma>1$, and the condition \eqref{Proof-Thm-16-2000} is satisfied by   
\beq
\label{Proof-Thm-16-(2)}
B_1
\left[
1-
\left\{
1+\frac{n(1+\sigma)HT}{2}
\right\}^{1-\gamma}
\right]^{1/q_\ast}
\le Gm^\delta
\eeq
by (2) in Lemma \ref{Lem-15},  
and $0<m\le M$ by (3) in Lemma \ref{Lem-11}. 
This condition \eqref{Proof-Thm-16-(2)} holds for $T$ with 
\beq
\label{Proof-Thm-16-(2)-1000}
T
\le 
\begin{cases}
T_1(=T_0) 
& \mbox{if}\ B_1\le Gm^\delta, 
\\
\frac{2}{n(1+\sigma)H}
\left[
\left\{
1-\left(
\frac{Gm^\delta}{B_1}
\right)^{q_\ast}
\right\}^{-1/(\gamma-1)}-1
\right]
& \mbox{if}\ B_1> Gm^\delta,
\end{cases}
\eeq
where we note that $B_1\le Gm^\delta$ is equivalent to 
\beq
\label{Proof-Thm-16-(2)-2500}
D_{\mu_0}\le \frac{a_0^{\mu_0}}{2C_0}
\left(
\frac{m^\delta}{2CcB_1}
\right)^{1/(p-1)}.
\eeq
So that, we obtain the required result.

(iii) 
Under the assumption, 
we have $\gamma>1$, and the condition \eqref{Proof-Thm-16-2000} is written as 
\[ 
B_1
\left[
1-\left\{
1+\frac{n(1+\sigma)HT}{2}
\right\}^{1-\gamma}
\right]^{1/q_\ast}
\le G
\left[
\sigma
\left(\frac{nH}{2c}\right)^2
\left\{
1+\frac{n(1+\sigma)HT}{2}
\right\}^{-2}
\right]^{\delta/2}
\]
by (2) in Lemma \ref{Lem-15},
which is the required result.

(iv), (v)  
Under the assumption, 
we have $\gamma<1$, and the condition \eqref{Proof-Thm-16-2000} is written as 
\beq
\label{Proof-Thm-16-(4)}
B_1
\left[
\left\{
1+\frac{n(1+\sigma)HT}{2}
\right\}^{1-\gamma}
-1
\right]^{1/q_\ast}
\le GM(T)^\delta
\eeq
by (3) in Lemma \ref{Lem-15},
which is satisfied if $T$ satisfies 
\[
0<T\le 
\frac{2}{n(1+\sigma)H}
\left[
\left\{
1+\left(
\frac{Gm^\delta}{B_1}
\right)^{q_\ast}
\right\}^{1/(1-\gamma)}-1
\right],
\]
where we have used $0<m\le M(T)$ by (3) in Lemma \ref{Lem-11}.

(vi) 
Under the assumption, 
we have $\gamma<1$, and the condition \eqref{Proof-Thm-16-2000} is written as 
\[ 
B_1
\left[
\left\{
1+\frac{n(1+\sigma)HT}{2}
\right\}^{1-\gamma}
-1
\right]^{1/q_\ast}
\le G
\left[
\sigma
\left(\frac{nH}{2c}\right)^2
\left\{
1+\frac{n(1+\sigma)HT}{2}
\right\}^{-2}
\right]^{\delta/2}
\]
by (3) in Lemma \ref{Lem-15},
which is the required result.

(vii) 
Under the assumption, 
we have $\gamma<1$, and 
the condition \eqref{Proof-Thm-16-2000} is written as 
\[ 
B_1
\left[
1-
\left\{
1+\frac{n(1+\sigma)HT}{2}
\right\}^{1-\gamma}
\right]^{1/q_\ast}
\le G
\left[
m^2+
\sigma
\left(\frac{nH}{2c}\right)^2
\left\{
1+\frac{n(1+\sigma)HT}{2}
\right\}^{-2}
\right]^{\delta/2}
\]
by (3) in Lemma \ref{Lem-15},
which yields the required result.

(viii) 
Under the assumption, 
we have $\gamma=1$, 
and the condition \eqref{Proof-Thm-16-2000} is rewritten as 
\beq
\label{Proof-Thm-16-(7)}
B_2
\left[ \log
\left\{
1+\frac{n(1+\sigma)HT}{2}
\right\}
\right]^{1/q_\ast}
\le GM(T)^\delta
\eeq
by (4) in Lemma \ref{Lem-15}.
We rewrite this as 
\[
T\le \frac{2}{n(1+\sigma)H}
\left\{
e^{(GM(T)^\delta/B_2)^{q_\ast} }-1
\right\},
\]
which is satisfied by the required result by $0<m\le M(T)$ due to (3) in Lemma \ref{Lem-11}.

(ix)
Similarly to (viii), we have $\gamma=1$ and \eqref{Proof-Thm-16-(7)}.
We obtain the required result by (1) in Lemma \ref{Lem-11} by $m=0$.

(x) 
Under the assumption, 
the condition \eqref{Proof-Thm-16-2000} is written as 
\[
B_3\left\{
1-e^{-\mu_0(p-1)HTq_\ast} 
\right\}^{1/q_\ast}
\le GM(T)^\delta
\]
by (5) in Lemma \ref{Lem-15},
which is satisfied if $T$ satisfies 
\beq
\label{Proof-Thm-16-ix-1000}
T
\le 
\begin{cases}
T_1(=T_0=\infty) 
& \mbox{if}\ B_3\le G
\left\{ 
m^2-\left(\frac{nH}{2c}\right)^2
\right\}^{\delta/2}, 
\\
-\frac{1}{\mu_0(p-1)Hq_\ast} 
\log
\left(
1-
\left[
\frac{G}{B_3}
\left\{
m^2-\left(\frac{nH}{2c}\right)^2\right
\}^{\delta/2} 
\right]^{q_\ast}
\right)
& \mbox{if}\ B_3> G
\left\{ 
m^2-\left(\frac{nH}{2c}\right)^2
\right\}^{\delta/2}, 
\end{cases}
\eeq
which is the required result,
where we note that 
$B_3\le G\left\{m^2-({nH}/{2c})^2\right\}^{\delta/2}$ 
is equivalent to
\beq
\label{Proof-Thm-16-ix-2000}
D_{\mu_0}
\le 
\frac{a_0^{\mu_0}}{2C_0}
\left(
\frac{ \left\{m^2-(nH/2c)^2 \right\}^{\delta/2} }{2CcB_3}
\right)^{1/(p-1)}.
\eeq
So that, we obtain the required result.

(xi) 
Under the assumption, 
we have $B=(2HT)^{1/q_\ast}/2H$ by (6) in Lemma \ref{Lem-15}, 
and the condition \eqref{Proof-Thm-16-2000} is easily rewritten as 
the required condition for $T$.

(xii) 
Under the assumption, 
the condition \eqref{Proof-Thm-16-2000} is rewritten as \eqref{Thm-16-11-2} 
by (1) in Lemma \ref{Lem-11} and (7) in Lemma \ref{Lem-15}, 
which requires \eqref{Thm-16-11-1} for $T>0$.

(xiii) 
Under the assumption, 
the condition \eqref{Proof-Thm-16-2000} is rewritten as 
\beq
\label{Proof-Thm-16-(13)}
1\le 2HGM(T)^{\delta}
\eeq
by (7) in Lemma \ref{Lem-15}.
When $\delta=0$, i.e., $p=p(\mu_0)$, 
the inequality \eqref{Proof-Thm-16-(13)} is rewritten as 
\[
D_{\mu_0}
\le \frac{a_0^{\mu_0}}{2C_0} \left(\frac{H}{Cc}\right)^{1/(p-1)},
\]
where $T$ can be taken arbitrarily under $0<T<T_1$ 
and $m>\sqrt{|\sigma|}nH/2c$ 
by (5) in Lemma \ref{Lem-11}.
When $\delta>0$, i.e., $p<p(\mu_0)$, 
the inequality \eqref{Proof-Thm-16-(13)} holds if and only if 
\beq
\label{Proof-Thm-16-(13)-1000}
m>(2HG)^{-1/\delta},
\ \ 
T\le 
-\frac{2}{n(1+\sigma)H}
\left[
1-\sqrt{\frac{|\sigma|}{m^2-(2HG)^{-2/\delta} } }
\frac{nH}{2c}
\right]
\eeq
by (1) in Lemma \ref{Lem-11} with  
\[
D_{\mu_0}
<
\frac{a_0^{\mu_0}}{2C_0}
\left[
\frac{H}{Cc}
\left\{
m^2+\sigma
\left(
\frac{nH}{2c}
\right)^2
\right\}^{\delta/2}
\right]^{1/(p-1)},
\]
where the condition 
$nH\sqrt{{|\sigma|}/\{m^2-(2HG)^{-2/\delta} \} }
/{2c}<1$ requires 
$m^2-(2HG)^{-2/\delta}>|\sigma|(nH/2c)^2$ 
which yields the first inequality in \eqref{Proof-Thm-16-(13)-1000}.

(2) 
We prove the existence of global solutions for small data.
It suffices to show that the inequality \eqref{Proof-Thm-16-2000} holds for $T=T_0$ 
under each assumption from (i) to (iv) in the statement.

(i) 
This case corresponds to the case (ii) in (1).
We can take $T=T_0$ under the condition 
\eqref{Proof-Thm-16-(2)-2500} by \eqref{Proof-Thm-16-(2)-1000}.

(ii) 
This case corresponds to the case (x) in (1).
We can take $T=T_0$ under the condition \eqref{Proof-Thm-16-ix-2000} by \eqref{Proof-Thm-16-ix-1000}.

(iii), (iv) 
The condition \eqref{Proof-Thm-16-2000} is rewritten as 
\[
D_{\mu_0}
\le 
\frac{a_0^{\mu_0}}{2C_0}
\left\{
\frac{HM(T)^\delta}{Cc}
\right\}^{1/(p-1)}
\]
by (8) in Lemma \ref{Lem-15}.
This inequality holds for $T=T_0$ 
using $0<m\le M$ if $\sigma\ge0$, 
and $\sqrt{m^2-(nH/2c)^2}\le M(T)$ if $\sigma=-1$ by (3) in Lemma \ref{Lem-11}.
The results (iii) and (iv) follow from the cases $\sigma\ge0$ and $\sigma=-1$,
respectively.

(3) 
The results directly follow from (4) in Theorem \ref{Thm-13-Cor-14}.

\newsection{Proof of Theorem \ref{Thm-19}}
\label{Sec-Thm-19}
We put $g:=a^2\|u\|_2^2$.
Since we have 
\[
a^2\frac{d^2}{dt^2}\|u\|^2_2=2a^2\|\partial_t u\|_2^2-2c^2\|\nabla u\|_2^2-2c^2M^2 g-(p+1)c^2a^{2-n(p-1)/2}\int_\brn V(u) dx
\]
by the energy estimates \eqref{f-V} and (5) in Lemma \ref{Lem-2}, 
we obtain
\begin{eqnarray}
\label{Proof-Thm-19-1000}
\ddot{g}
&=&
-I\cdot 
g
+4a^{-1}\dot{a}\dot{g}
+2a^2\|\partial_t u\|_2^2
-2c^2\|\nabla u\|_2^2
\nonumber\\
&&
-(p+1)c^2a^{2-n(p-1)/2}
\int_\brn V(u) dx
\end{eqnarray}
and 
\[
a^2\widetilde{F}^0+\widetilde{F}^1=a_0^2 \widetilde{F}^0_0,
\]
where we have put 
\[
I:=
-2\frac{d}{dt}
\left(
\frac{\dot{a}}{a}
\right)
+4\left(\frac{\dot{a}}{a}\right)^2
+2c^2M^2,
\]
\[
\widetilde{F}^0(t):=\int_\brn \widetilde{e}^0(t,x) dx,
\ \ 
\widetilde{F}^0_0:= \widetilde{F}^0(0)
\]
and 
\beq
\label{Proof-Thm-19-3000}
\widetilde{F}^1(t)
:=
\int_0^t\int_\brn 
-2a(s)\dot{a}(s)\widetilde{e}^0(s,x)+a^2(s)\widetilde{e}^{n+1}(s,x) dx ds
\eeq
for $\widetilde{e}^0$ and $\widetilde{e}^{n+1}$ in Lemma \ref{Lem-2}.
We note 
\begin{eqnarray}
\label{Proof-Thm-19-2000}
\kappa c^2 a^2 \widetilde{F}^0
&=&
\kappa a^2 \|\partial_tu\|_2^2+\kappa c^2\|\nabla u\|_2^2
\nonumber\\
&& 
+\kappa c^2 a^2 M^2 \|u\|_2^2+\kappa c^2a^{2-n(p-1)/2}\int_\brn V(u) dx
\end{eqnarray}
for $\kappa\in \br$ by the definition of $\widetilde{e}^0$.

By \eqref{Proof-Thm-19-1000} and \eqref{Proof-Thm-19-2000},
we have 
\begin{eqnarray}
\label{Proof-Thm-19-2500}
-\kappa c^2 a_0^2 \widetilde{F}^0_0
&=&
\ddot{g}+\two g-4a^{-1}\dot{a}\dot{g}-(\kappa+2)a^2\|\partial_tu\|_2^2
+R
\nonumber\\
&\le& 
\ddot{g}+\two g-4a^{-1}\dot{a}\dot{g}-(\kappa+2)a^2\|\partial_tu\|_2^2
\end{eqnarray}
if $R\le 0$, where we have put 
$\two:=I-\kappa c^2 M^2$ and 
\[
R:=-(\kappa-2)c^2\|\nabla u\|_2^2
+
(p+1-\kappa)c^2a^{2-n(p-1)/2} \int_\brn V(u) dx
-
\kappa c^2 \widetilde{F}^1.
\]
Since we have 
\begin{multline*}
-2a\dot{a} \widetilde{e}^0+a^2\widetilde{e}^{n+1}
=
-2c^{-2}a\dot{a} |\partial_t u|^2-2(a\dot{a}M^2+a^2M\dot{M})|u|^2
\\
+\left\{ \frac{n(p-1)}{2}-2\right\} a^{1-n(p-1)/2}  \dot{a} V(u),
\end{multline*}
we obtain $\widetilde{F}^1 \ge 0$ 
by \eqref{Proof-Thm-19-3000} 
if $\dot{a}\le0$, $\dot{M}\le 0$ and $\lambda\{n(p-1)-4\}\le 0$.
Provided $\widetilde{F}^1\ge 0$, we also obtain 
$R\le 0$ if $\kappa\ge2$ and $\lambda(p+1-\kappa)\le 0$.

We have
\beq
\label{Proof-Thm-19-4000}
-\kappa c^2 a_0^2 \widetilde{F}^0_0g
\le 
g\ddot{g}+\two g^2-4a^{-1}\dot{a}g\dot{g}-(\kappa+2)a^2\|\partial_tu\|_2^2g
\eeq
multiplying $g$ to the both sides of the last inequality in \eqref{Proof-Thm-19-2500}.
Since we have 
\beq
\label{Proof-Thm-19-4500}
\dot{g}=2a^2\Re\int \overline{\partial_t u} u dx+2a^{-1}\dot{a} g
\eeq
by direct calculation, 
we have 
\beq
\label{Proof-Thm-19-5000}
\left|
\dot{g}-2a^{-1}\dot{a} g
\right|^2
\le 4 a^2\|\partial_t u\|_2^2 g.
\eeq
By \eqref{Proof-Thm-19-4000} and \eqref{Proof-Thm-19-5000},
we obtain 
\beq
\label{Proof-Thm-19-6000}
-\kappa c^2 a_0^2 \widetilde{E}^0_0g
\le
g\ddot{g}+\three g^2+(\kappa-2)a^{-1}\dot{a} g\dot{g}-\frac{(\kappa+2)(\dot{g})^2}{4}, 
\eeq
where we have put 
\[
\three:=\two-(\kappa+2)
\left(
\frac{\dot{a}}{a}
\right)^2.
\]
So that, we obtain 
\beq
\label{Proof-Thm-19-7000}
0
<
g\ddot{g}+(\kappa-2)a^{-1}\dot{a} g\dot{g}-\frac{(\kappa+2)(\dot{g})^2}{4} 
\eeq
by $\widetilde{F}^0_0<0$ and $\three\le 0$  
which are the assumptions \eqref{Thm-19-4000} and \eqref{Thm-19-3000}, respectively, 
since $\three$ is rewritten as 
\[
\three=
-(\kappa-2)
\left\{
c^2M^2
+
\left(
\frac{ \dot{a} }{a}
\right)^2
\right\}
-
2\frac{d}{dt}
\left(
\frac{ \dot{a} }{a}
\right).
\]

\begin{claim}
\label{Claim-DotG}
The function $g$ with the differential inequality \eqref{Proof-Thm-19-7000} on the interval $[0,T)$ satisfies $\dot{g}\ge0$ on $[0,T)$ 
if $g>0$ and $\dot{g}(0)\ge0$.
\end{claim}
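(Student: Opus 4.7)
\medskip

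\noindent\textbf{Proof plan for Claim \ref{Claim-DotG}.}
The plan is to reduce the second-order differential inequality on $g$ to a first-order differential inequality on $\dot G$ for a suitably chosen power $G=g^{-\beta}$, and then close the argument by a first-touch/continuity argument. Since $\kappa>2$ by \eqref{Thm-19-2000}, the natural choice is $\beta:=(\kappa-2)/4>0$, for which $\beta+1=(\kappa+2)/4$. Because $g>0$ on $[0,T)$ by hypothesis, $G:=g^{-\beta}$ is well defined and of class $C^2$.

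A direct computation gives
\begin{equation*}
\dot G=-\beta\, g^{-\beta-1}\dot g,\qquad
\ddot G=-\beta\, g^{-\beta-2}\!\left[g\ddot g-(\beta+1)(\dot g)^2\right]
=-\beta\, g^{-\beta-2}\!\left[g\ddot g-\tfrac{\kappa+2}{4}(\dot g)^2\right].
\end{equation*}
Inserting the hypothesis \eqref{Proof-Thm-19-7000} (which states $g\ddot g-\frac{\kappa+2}{4}(\dot g)^2>-(\kappa-2)a^{-1}\dot a\, g\dot g$) and using $\beta>0$ yields
\begin{equation*}
\ddot G<-\beta\, g^{-\beta-2}\bigl(-(\kappa-2)a^{-1}\dot a\, g\dot g\bigr)
=\beta(\kappa-2)\,\tfrac{\dot a}{a}\, g^{-\beta-1}\dot g
=-(\kappa-2)\,\tfrac{\dot a}{a}\,\dot G.
\end{equation*}
Writing $\phi(t):=-(\kappa-2)\dot a(t)/a(t)$, which is nonnegative on $[0,T)$ since $\kappa>2$, $a>0$ and $\dot a\le 0$ by \eqref{Condition-aM-BlowUp}, we obtain the scalar strict inequality
\begin{equation*}
\ddot G(t)<\phi(t)\,\dot G(t)\qquad\text{on }[0,T).
\end{equation*}
The initial conditions transform as $G(0)>0$ and $\dot G(0)=-\beta g(0)^{-\beta-1}\dot g(0)\le 0$.

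The conclusion $\dot g\ge 0$ on $[0,T)$ is equivalent to $\dot G\le 0$ on $[0,T)$, which I would establish by contradiction. Let $t_1:=\inf\{t\in[0,T):\dot G(t)>0\}$, and suppose this set is nonempty. By continuity of $\dot G$ together with $\dot G(0)\le 0$, one has $\dot G(t_1)=0$, and by the definition of the infimum there exist $t_n\downarrow t_1$ with $\dot G(t_n)>0$. But the derived inequality at $t_1$ gives $\ddot G(t_1)<\phi(t_1)\cdot 0=0$, so $\dot G$ is strictly decreasing at $t_1$; combined with $\dot G(t_1)=0$, this forces $\dot G(t)<0$ for all $t$ in a small right-neighborhood of $t_1$, contradicting the existence of such $t_n$. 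Hence $\dot G\le 0$, and therefore $\dot g\ge 0$, on $[0,T)$.

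There is no real obstacle: the only genuine choice is the exponent $\beta=(\kappa-2)/4$, which is dictated by matching the coefficient of $(\dot g)^2$ in the definition of $\ddot G$ with that in \eqref{Proof-Thm-19-7000}; after that, the signs of $\kappa-2$ and $\dot a$ supplied by \eqref{Thm-19-2000} and \eqref{Condition-aM-BlowUp} make $\phi\ge 0$, and the standard first-touch argument finishes the proof.
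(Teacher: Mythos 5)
Your proof is correct and rests on the same mechanism as the paper's: a first-touch/contradiction argument at the first time the sign of the derivative could change, where the differential inequality \eqref{Proof-Thm-19-7000} forces the second derivative to have the favorable sign because the $\dot g$-terms vanish there. The only difference is your preliminary substitution $G=g^{-(\kappa-2)/4}$, which is harmless but inessential for this claim (the paper applies the inequality directly to $\dot g$ at the touch point and reserves that substitution for the subsequent blow-up step).
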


\begin{proof}
Assume that there exists $t\in [0,T)$ such that $\dot{g}(t)<0$.
Put $t_0:=\inf\{t>0;\ \dot{g}(t)<0\}$.
We have $\dot{g}(t_0)=0$ by $\dot{g}(0)\ge0$ and the continuity of $g$.
Thus, we have 
\begin{eqnarray*}
0
&<&
g(t_0)\ddot{g}(t_0)
+(\kappa-2)a(t_0)^{-1}\dot{a}(t_0) g(t_0)\dot{g}(t_0)-\frac{(\kappa+2)(\dot{g}(t_0))^2}{4}, 
\\
&=&g(t_0)\ddot{g}(t_0)
\end{eqnarray*}
by \eqref{Proof-Thm-19-7000},
which shows $\ddot{g}(t_0)>0$ by $g>0$.
So that, we have 
\[
\dot{g}(t_0+\varepsilon)=\int_{t_0}^{t_0+\varepsilon} \ddot{g}(s) ds>0
\]
for sufficiently small $\varepsilon>0$,
which contradicts to the definition of $t_0$.
We have shown $\dot{g}\ge0$ on $[0,T)$.
\end{proof}

We note $\dot{g}(0)>0$ by the assumption \eqref{Thm-19-5000} and \eqref{Proof-Thm-19-4500}.
Since $u_0\not\equiv 0$ by \eqref{Thm-19-5000}, we have $g(0)>0$,
by which $g>0$ on the interval $[0,\varepsilon)$ for small $\varepsilon>0$.
So that, we have $\dot{g}\ge0$ on $[0,\varepsilon)$ by Claim \ref{Claim-DotG}.
Since $t$ with $0<t<T$ and $g(t)=0$ does not exist by \eqref{Proof-Thm-19-7000},
we have $g>0$ on $[0,T)$,
which yields $\dot{g}\ge0$ on $[0,T)$ again by Claim \ref{Claim-DotG}.

We have 
$\ddot{g}>0$ 
by \eqref{Proof-Thm-19-7000}, $\kappa> 2$, $\dot{a}\le 0$, $g>0$ and $\dot{g}\ge 0$.
Thus, we also have 
$\dot{g}(t)>\dot{g}(0)\ge 0$ for $t>0$.

Now, we consider a function $G:=g^{-\kappa_\ast}$ for $0<\kappa_\ast\le (\kappa-2)/4$ with $\kappa>2$.
We have 
\beq
\label{Proof-Thm-19-8000}
\dot{G}=-\kappa_\ast g^{-\kappa_\ast-1}\dot{g}\le 0
\eeq
for $t\ge0$ by $\dot{g}\ge0$.
We also have 
\begin{eqnarray*}
\ddot{G}
&=&
\kappa_\ast
G^{1+2/\kappa_\ast}
\left\{
(\kappa_\ast+1)(\dot{g})^2-g\ddot{g}
\right\}
\\
&<&
\kappa_\ast
G^{1+2/\kappa_\ast}
\left\{
\left(\kappa_\ast-\frac{\kappa-2}{4}\right)(\dot{g})^2
+
\frac{(\kappa-2)\dot{a}\dot{g} g}{a}
\right\}
\end{eqnarray*}
on $[0,T)$ 
by \eqref{Proof-Thm-19-7000}.
By $\dot{g}>0$, $g\ge g(0)$ and $\dot{a}\le 0$, 
we obtain 
\beq
\label{Proof-Thm-19-9000}
\ddot{G}
\le 
-\gamma G(t)^{1+2/\kappa_\ast},
\ \ 
\gamma:=
\kappa_\ast
\left(
\frac{\kappa-2}{4}-\kappa_\ast
\right)
(\dot{g}(0))^2>0
\eeq
on $[0,T)$.
By \eqref{Proof-Thm-19-8000} and \eqref{Proof-Thm-19-9000}, 
we have 
$\dot{G}\le \dot{G}(0)$ and
\[  
G(t)=G(0)+\int_0^t\dot{G}(s) ds\le G(0)+\dot{G}(0)t \to 0
\]
as $t\to T_\ast:=-G(0)/\dot{G}(0)$.
Thus, we have
$g(t)\to\infty$ as $t\to T_\ast$.
So that, we obtain 
$\|u(t)\|_2\to\infty$ as $t\to T_\ast$ by $0<a\le a_0$ on $[0,T_0)$ by $\dot{a}\le 0$.
We note that $T_\ast$ is rewritten as \eqref{Thm-19-6000} by the definition of $G$.
We need the condition $T_\ast\le T_1$ to show the blowing-up occurs in the interval $[0,T_1)$.

\newsection{Proof of Corollary \ref{Cor-20} }
\label{Sec-Cor-20}
Firstly, we check the assumption \eqref{Condition-aM-BlowUp}.
We note that $\dot{a}\le 0$ holds if and only if $H\le0$ by (1) in Lemma \ref{Lem-10}. 
Under $H\le 0$, we note that $\dot{M}\le 0$ holds if and only if $H=0$ with $\sigma\in \br$, or $H<0$ with $-1\le \sigma\le 0$ by (2) in Lemma \ref{Lem-11}.
We note that 
$M^2\ge0$ on $[0,T_\ast)$ holds under one of the following conditions from (i)' to (iv)' by (3), (4) and (5) in Lemma \ref{Lem-11} under $H\le 0$, 
where $T_\ast$ is defined by \eqref{Thm-19-6000}.
\[
\begin{cases} 
(i)' & H=0, \ \sigma\in \br, \  m\ge0, \ T_\ast\le T_0(=\infty). 
\\
(ii)' & H<0, \ \sigma=-1, \ m\ge \frac{n|H|}{2c}, \ T_\ast \le T_0(=\infty).
\\
(iii)' & H<0, \ \sigma=0, \ m\ge0, \ T_\ast\le T_0(<\infty). 
\\
(iv)' & H<0, \ -1<\sigma<0, \ m>\frac{\sqrt{|\sigma|}n|H|}{2c}, \ T_\ast\le T_1(<\infty).
\end{cases}
\]

Secondly, we check the assumption \eqref{Thm-19-3000}.
Put 
\[
I:=
(p-1)\left(
1+\frac{\sigma n^2}{4}
\right)
-n(1+\sigma), 
\ \ 
J:=(p-1)m^2c^2+IH^2
\left(\frac{a}{a_0}\right)^{-n(1+\sigma)}.
\]
The assumption \eqref{Thm-19-3000} is rewritten as $J\ge0$ on the interval $[0,T_\ast)$ 
by Lemma \ref{Lem-10} and (1) in Lemma \ref{Lem-11},
where we have put $\kappa=p+1$.
It suffices to check that this assumption is satisfied under each condition from (i) to (vii) in Corollary \ref{Cor-20}.
Under (i), the condition (i)' holds, and we have $J=(p-1)m^2c^2\ge0$.
Under (ii), the condition (ii)' holds, and we have 
\[
J=(p-1)
\left\{
m^2c^2+
\left(
1-\frac{n^2}{4}
\right)H^2
\right\},
\]
and $J\ge0$ holds if 
$m\ge0$ for $n=1$ and $n=2$, 
or if 
$m\ge \sqrt{n^2/4-1} |H|/c$ for $n\ge3$, 
which are satisfied under (ii)'.

To consider the conditions from (iii) to (vii), let $H<0$ and $-1<\sigma\le0$ under (iii)' and (iv)' in the following.
$J\ge0$ on $[0,T_\ast)$ holds 
if $I\ge0$, 
or if $I<0$, $\two<1$ on $[0,T_\ast)$ with $T_\ast \le T_2(<T_0)$, 
where 
\[
\two:=\sqrt{\frac{-I}{p-1}} \cdot \frac{|H|}{mc}
\]
and we note $T_2$ defined by \eqref{Def-T2} satisfies 
$T_2:=T_0(1-\two)$.
We recall $p_\ast$ and $p_\sharp$ defined by \eqref{Def-pAst-pSharp}. 
The inequality $I\ge0$ holds if and only if 
$\sigma>-4/n^2$ and 
$p\ge p_\ast$ 
by $\sigma>-1$.
The inequality $I<0$ holds if and only if 
$\sigma>-4/n^2$ and $p<p_\ast$, 
or if $\sigma\le -4/n^2$ and $p<\infty$ by $\sigma>-1$ again.
The inequality $\two<1$ holds if and only if 
$I\ge0$ and $p<\infty$,
or if 
$I<0$ and 
\[
\sigma>-\frac{4}{n^2}
\left\{
1+\left(\frac{mc}{H}\right)^2
\right\},
\ \ p_\sharp<p<\infty.
\]
So that, 
$J\ge0$ holds under one of the following conditions when $H<0$ and $-1<\sigma\le0$. 
\[
\begin{cases}
(i)''\ \ \sigma>-\frac{4}{n^2},\ p\ge p_\ast,\ T_\ast\le T_0. \\
(ii)'' \ \ \sigma>-\frac{4}{n^2}, \ p_\sharp<p<p_\ast,\ T_\ast\le T_2. \\
(iii)''\ \ -\frac{4}{n^2}\left\{1+\left(\frac{mc}{H}\right)^2\right\}<\sigma\le -\frac{4}{n^2},
\ p>p_\sharp,\ T_\ast\le T_2.
\end{cases}
\]

Under (iii), the conditions (iii)' and (i)'' hold. 
Under (iv), the conditions (iii)' and (ii)'' hold. 
Under (v), the conditions (iv)' and (i)'' hold. 
Under (vi), the conditions (iv)' and (iii)'' hold. 
Under (vii), the conditions (iv)' and (ii)'' hold. 
So that, the assumptions \eqref{Condition-aM-BlowUp} and \eqref{Thm-19-3000} hold under the conditions from (i) to (vii) in Corollary \ref{Cor-20}, and the required result follows from Theorem \ref{Thm-19}.

%%%%%%%%%%%%%%%%%%%%%%%%%%%%%%%%%%
%
%%%%%%%%%%%%%%%%%%%%%%%%%%%%%%%%%%

\end{document}